\newcolumntype{P}[1]{>{\centering\arraybackslash}p{#1}}  
\newcommand{\E}{\mathbb{E}}  
\DeclareMathOperator{\poly}{poly}
\newtheorem{theorem}{Theorem}[section]
\newtheorem{definition}[theorem]{Definition}
\newcommand{\InfleqtionM}{Infleqtion, Madison, WI, 53703}
\begin{document}

\preprint{APS/123-QED}

\title{Supercheq: Quantum Advantage for Distributed Databases}

\author{Eric R. Anschuetz$^{*\dagger}$}
\affiliation{Infleqtion, Chicago, IL, USA}
\affiliation{Department of Physics, Massachusetts Institute of Technology}

\author{Pranav Gokhale$^*$}
\affiliation{Infleqtion, Chicago, IL, USA}

\author{Behnam Tonekaboni}
\affiliation{Infleqtion, Melbourne, VIC, Australia}

\author{Colin Campbell}
\affiliation{Infleqtion, Chicago, IL, USA}

\author{Frederic T. Chong}
\affiliation{Infleqtion, Chicago, IL, USA}
\affiliation{Department of Computer Science, University of Chicago}

\author{Edward D. Dahl$^\dagger$}
\affiliation{Infleqtion, Chicago, IL, USA}

\author{Paige Frederick$^\dagger$}
\affiliation{Infleqtion, Chicago, IL, USA}

\author{Eric B. Jones}
\affiliation{Infleqtion, Chicago, IL, USA}

\author{Benjamin Hall}
\affiliation{Infleqtion, Chicago, IL, USA}

\author{Salahedeen Issa$^\dagger$}
\affiliation{Infleqtion, Chicago, IL, USA}

\author{Palash Goiporia}
\affiliation{Infleqtion, Chicago, IL, USA}

\author{Junyu Liu}
\affiliation{Department of Computer Science, University of Chicago}
\affiliation{Pritzker School of Molecular Engineering, University of Chicago}
\affiliation{Kadanoff Center for Theoretical Physics, University of Chicago}
\affiliation{Department of Computer Science, University of Pittsburgh}

\author{Stephanie Lee}
\affiliation{Infleqtion, Chicago, IL, USA}

\author{Peter Noell}
\affiliation{Infleqtion, Chicago, IL, USA}

\author{Victory Omole}
\affiliation{Infleqtion, Chicago, IL, USA}

\author{David Owusu-Antwi}
\affiliation{Infleqtion, Chicago, IL, USA}

\author{Michael A. Perlin$^\dagger$}
\affiliation{Infleqtion, Chicago, IL, USA}

\author{Rich Rines}
\affiliation{Infleqtion, Chicago, IL, USA}

\author{Mark Saffman}
\affiliation{\InfleqtionM}
\affiliation{Department of Physics, University of Wisconsin-Madison, Madison, WI, 53706}

\author{Kaitlin N. Smith}
\affiliation{Infleqtion, Chicago, IL, USA}
\affiliation{Department of Computer Science, Northwestern University}

\author{Teague Tomesh}
\affiliation{Infleqtion, Chicago, IL, USA}
\affiliation{Department of Computer Science, Princeton University}

\footnote[0]{$^*$ These two authors contributed equally.}

\footnote[0]{$^\dagger$ E.R.A. is now at Caltech, E.D.D is now at IonQ, P.F. will be joining UC Berkeley in Fall 2026, S.I. is now at Princeton University, and M.A.P. is now at JPMorganChase.}

\date{\today}

\begin{abstract}
We introduce Supercheq, a family of quantum protocols that achieves asymptotic advantage over classical protocols for checking the equivalence of files, a task also known as fingerprinting. The first variant, Supercheq-EE (Efficient Encoding), uses $n$ qubits to verify files with $2^{O(n)}$ bits---an exponential advantage in communication complexity (i.e.~bandwidth, often the limiting factor in networked applications) over the best possible classical protocol in the simultaneous message passing setting. Moreover, Supercheq-EE can be gracefully scaled down for implementation on circuits with $\poly(n^\ell)$ depth to enable verification for files with $O(n^\ell)$ bits for arbitrary constant $\ell$. The quantum advantage is achieved by random circuit sampling, thereby potentially endowing circuits from recent quantum supremacy and quantum volume experiments with a practical application. We validate Supercheq-EE's performance at scale through GPU simulation motivated by Infleqtion's Sqale neutral atom QPU gateset. The second variant, Supercheq-IE (Incremental Encoding), also achieves arbitrary-polynomial advantage in fingerprint size ($n$ qubits to verify files with size $O(n^{\ell})$ bits), while supporting incremental updates to the fingerprint using only a constant number of $(\ell-1)$-qubit gates. Moreover, Supercheq-IE at $\ell=2$ ($\geq 3$) only requires Clifford gates (gates in the $\ell-1$ level of the Clifford hierarchy), ensuring relatively modest overheads for error-corrected implementation. We experimentally demonstrate proof-of-concepts on quantum hardware from Diraq (spin qubit) and IBM (superconducting). We envision Supercheq could be deployed in distributed data settings, accompanying replicas of important databases.

\end{abstract}

\maketitle

\section{Introduction}

A hallmark of modern networks and databases is distributed replication of files, whether to improve availability, redundancy, or performance. However, replication requires protocols for integrity verification to check that copies of data have not diverged. This motivates the task of \textit{fingerprinting}: mapping an input bitstring to a shorter bitstring (the fingerprint) in order to distinguish two inputs with (arbitrarily) high success probability.

Fingerprinting has been extensively studied in the so-called \emph{simultaneous message passing} (SMP) setting~\cite{yao1979some, newman1996public, ambainis1996communication}. In this model, fingerprints of two bitstrings $A$ and $B$ are sent to a referee to verify whether $A=B$ with high probability. The goal is to minimize the communication complexity, i.e.~the number of bits transferred to the referee, of this process. Previous work~\cite{buhrman2001quantum} has shown that there is an exponential advantage in communication complexity when utilizing quantum states for this task. Namely, for $A$ and $B$ each of $N$ bits, there is a quantum protocol that succeeds with high probability using $O\left(\log N \right)$ qubits, while there is a known achievable lower-bound of $\Theta\left(\sqrt{N}\right)$ classical bits in performing this task~\cite{newman1996public}. Unfortunately, the protocol of \cite{buhrman2001quantum} requires constructing a complicated superposition state that generally takes time scaling polynomially with $N$ to prepare on a quantum computer.

\begin{figure}
    \centering
    \includegraphics[width=0.5\textwidth]{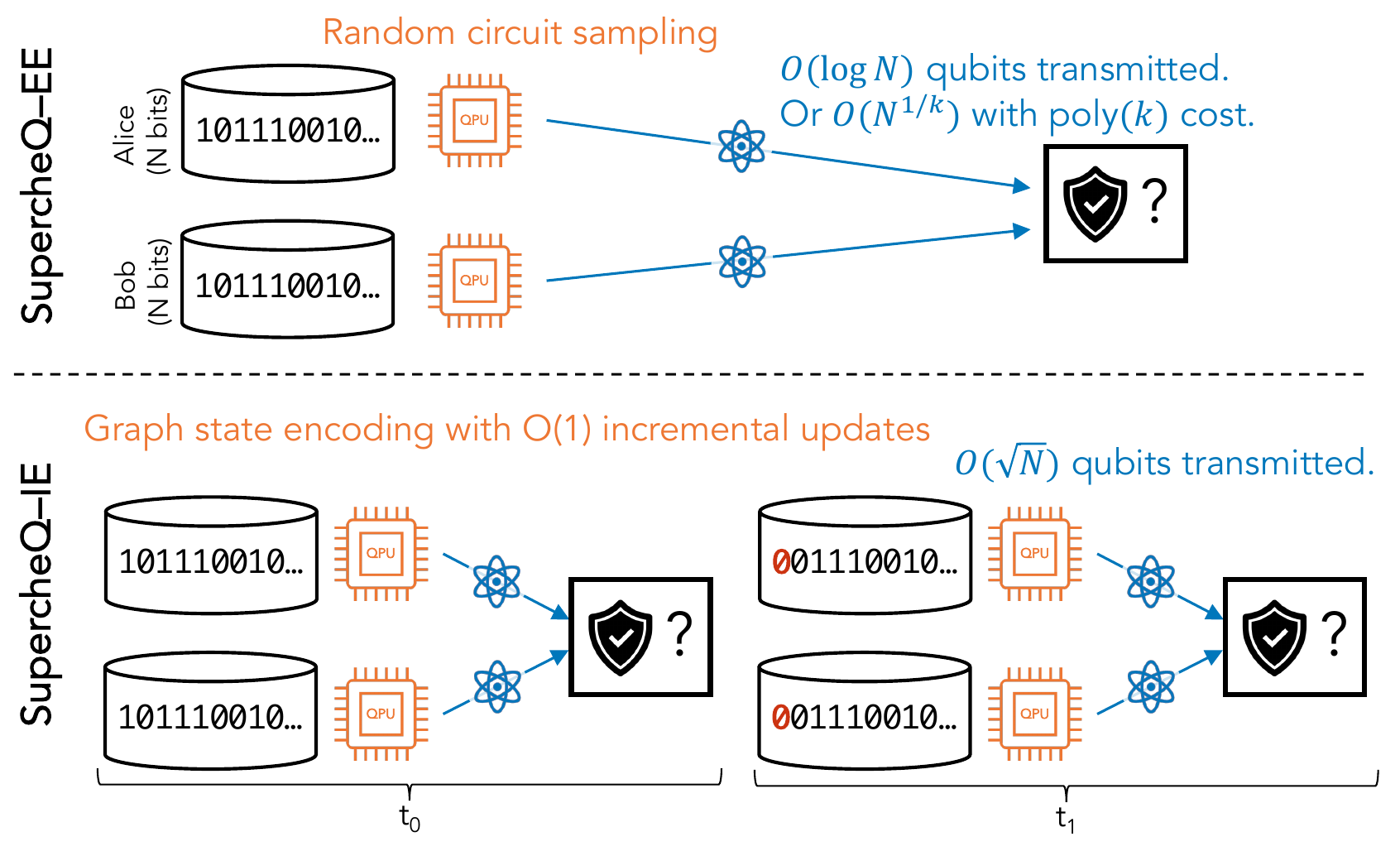}
    \caption{The EE (Efficient Encoding) variant of Supercheq compares two $N$-bit files by transmitting as few as $O(\log N)$ qubits---an exponential advantage over classical protocols in the SMP setting. The IE (Incremental Encoding) graph/hypergraph variants match/exceed the best-possible $\Theta(\sqrt{N})$ scaling of classical protocols while achieving constant-cost updates for incremental file changes.}
    \label{fig:overview}
\end{figure}

\begin{table*}
\footnotesize
\begin{tabular}{p{1.94cm}||P{1.99cm}|P{1.77cm}|P{1.62cm}|P{1.87cm}|P{1.31cm}|P{1.15cm}|P{1.4cm}|P{2.55cm}|P{0.65cm}}
Protocol  & Hash & Quantum FP & \multicolumn{3}{c|}{\textbf{Supercheq-EE} {[\S \ref{subsec:haar_random}, \ref{subsec:approx_unitary_t_design}, \ref{subsec:hw_eff}]}}  & \multicolumn{2}{c|}{\textbf{Supercheq-IE} {[\S\ref{sec:Supercheq_ie}]}} & Classical FP & Naive \\
{[Example]} & {[SHA-256]} & \cite{buhrman2001quantum} & {[Haar Rdm}] & {[Aprx $t$-Dsgn}] & {[HW-Eff]} & Graph & Hypergraph & \cite{ambainis1996communication} & \\ \hline \hline
FP Size $(n)$ & $O(1)$, [256 bits] &  $O(\log{N})$ & $O(\log{N})$ & $O(N^{1/\ell})$ & Empirical & $\left[\sim \sqrt{2N} \right]$ & $O(N^{1/\ell})$ & $\Theta \left(\sqrt{N} \right), \left[\sim \sqrt{3N} \right]$ & $N$ \\
Quantum Cost & N/A & $\exp(n)$ & $\exp(n)$ & $\poly(n^\ell)$ & Empirical & $O(n^2)$ & $O(n^\ell)$ & N/A & N/A \\
Collision-Free? & \textcolor{red}{\textbf{No}} & Yes & Yes & Yes & Yes & Yes & Yes & Yes & Yes \\
Incremental? & ?, {[No]} & No & No & No & No & \textcolor{teal}{\textbf{Yes}} & \textcolor{teal}{\textbf{Yes}} &  No & \textcolor{teal}{\textbf{Yes}} \\
\end{tabular}
\caption{Comparison of fingerprinting (FP) protocols for files with $N$ classical bits, in ascending order of the fingerprint size. We report our results in the special case of $\ell$ held constant with respect to $N$ for simplicity, though similar bounds hold in the more general case. Quantum cost refers to the gate count needed to produce the quantum fingerprint state. Collision-free refers to whether the protocol is with high probability safe against worst-case choices of files; in particular, hashing to a constant size (e.g. SHA-256) is not collision-free because there always exist (many) pairs of files that map to the same hashed value. Incremental refers to whether fingerprints can be updated with constant cost when a constant number of bits in the underlying file are flipped. We introduce Supercheq-Efficient Encoding (EE), which encompasses three subcategories: Haar-Random (\S \ref{subsec:haar_random}), Approximate Unitary $t$-Design (\S \ref{subsec:approx_unitary_t_design}), and Hardware-Efficient variant (\S \ref{subsec:hw_eff}). We also introduce Supercheq-Incremental Encoding (IE), which is the only protocol here that achieves incremental updates.}
\label{tab:comparison}
\end{table*}

To address these difficulties we introduce Supercheq, a family of quantum protocols that exhibit asymptotic advantage over classical protocols for fingerprinting, as summarized in Fig.~\ref{fig:overview} and Table~\ref{tab:comparison}. The first variant, Supercheq-EE (Efficient Encoding), achieves an (up-to) exponential advantage in communication complexity. This is often the limiting factor for modern distributed databases. Supercheq-EE can be used to check if two $2^{O(n)}$-bit files are identical by sending only $n$ qubits, which is an exponential advantage relative to the best-possible classical protocol. In a nearer-term setting with restricted quantum circuit gate counts and connectivity, Supercheq-EE can compare files of size $O(n^\ell)$ bits for arbitrarily large constant $\ell$, with quantum circuit cost that scales as $\poly(n^\ell)$. For comparison, $\ell=2$ is known to be the best that can be achieved classically in the SMP setting \cite{newman1996public, kremer1999randomized}. Interestingly, Supercheq-EE's core mechanism is random circuit sampling. As such, Supercheq-EE endows circuits from quantum supremacy and quantum volume experiments with the first plausible application for communication to our knowledge (see \cite{hangleiter2022computational}), though scalable utilization of Supercheq will likely require fault-tolerant quantum computing with error correction.

The second variant, Supercheq-IE (Incremental Encoding), has two subvariants---graph and hypergraph---that at least match the $\ell=2$ quadratic scaling of the best possible classical protocol, but with the advantage of being \textit{incremental}. For instance, in the Supercheq-IE graph subvariant, if a bit is flipped in an $N$-bit source file, the $n = \Theta(\sqrt{N})$-qubit fingerprint can be updated by applying a single gate. To our knowledge, no nontrivial classical fingerprinting protocols guarantee this property, nor are they likely to achieve this property. The Supercheq-IE protocol has two somewhat counterintuitive features:
\begin{enumerate}
    \item For the graph subvariant, the essential ingredient is a Clifford circuit, which is known to be efficiently classically simulable with a quadratic memory overhead. However, this quadratic separation is known to be optimal~\cite{karanjai2018contextuality,PhysRevX.12.021037,anschuetz2022interpretable} (i.e. it takes quadratically more classical bits than qubits to simulate Clifford circuits). As a result, quantum advantages are known to still persist in a variety of communication complexity settings, as recently demonstrated by \cite{PhysRevX.12.021037,anschuetz2022interpretable}.
    \item The protocol does not involve Grover search and no \emph{quantum random access memory} (QRAM) is involved---only an ordinary classical database.
\end{enumerate}

These features are appealing for error-corrected implementations of Supercheq-IE. Since only Clifford circuits are required for the graph subvariant, Supercheq-IE avoids the overhead of magic state distillation for non-Clifford operations~\cite{ogorman2017quantum}. In the hypergraph subvariant with degree-$\ell$ hyperedges, only gates in the $\ell - 1$ level of Clifford hierarchy are required, enabling cheap error corrected implementations with codes such as the hypercube quantum code \cite{kubica2015unfolding, hangleiter2025fault}. Moreover, in both subvariants, no special hardware for classical data loading is required.

The remainder of this paper is organized as follows. Sec.~\ref{sec:background} provides background on fingerprinting for checking equivalence of data. Secs.~\ref{sec:Supercheq-ee} and~\ref{sec:Supercheq_ie} specify the Supercheq-EE (Efficient Encoding) and Supercheq-IE (Incremental Encoding) protocols, respectively. Since our asymptotic bounds for Supercheq-EE are loose, we employ GPU-accelerated simulation motivated by Infleqtion's Sqale neutral atom QPU gateset to elucidate the real-world advantage achievable with Supercheq in Sec.~\ref{sec:simulation}. Next, we demonstrate experimental results for Supercheq, executed on quantum hardware from Diraq (spin qubit) and IBM (superconducting), in Sec.~\ref{sec:experimental}. Finally, we conclude in Sec.~\ref{sec:conclusion}. Appendix~\ref{app:proofs} presents mathematical proofs for key results, and Appendix~\ref{app:noisy} contains additional plots pertaining to noisy simulation of Supercheq-EE.

\section{Background} \label{sec:background}

\subsection{Notation and Setup}
Throughout this paper, $N$ will denote the number of classical bits composing a file of interest, while $n$ will denote the number of qubits (or bits) in the fingerprint. We focus exclusively on the simultaneous message passing (SMP) setting~\cite{yao1979some}. Under this model, we consider the communication complexity for Alice and Bob to send the fingerprint to a third-party referee, as depicted in Fig.~\ref{fig:overview}. The SMP setting allows for both Alice and Bob to have private coins (i.e., uncorrelated randomness), but assumes that Alice and Bob do not share a secure random key (which would require a secure key exchange protocol).

\subsection{Fingerprinting}
The most naive classical procedure for fingerprinting a file is to send the entire file itself. In this case---which corresponds to the right-most column in Table~\ref{tab:comparison}---we have that $n = N$. On the other end of the spectrum, we may also consider a hash function such as SHA-256, which maps every input file to a fingerprint (message digest) of $n = 256$ bits. In many cases, a hash comparison will suffice for checking equality. However, hashing has a worst-case error of 100\% in the sense that there will always exist differing files that collide with the same fingerprint (hash value). This is especially concerning in a security-sensitive or adversarial setting---for example, when Alice and Bob's files originate from a malicious supplier \cite{scott2005optimal}.

Restricting ourselves to the collision-free setting, where we approach zero worst-case error with high probability (w.h.p.), even classically it is possible to beat the $n = N$ naive solution. In particular, \cite{ambainis1996communication} demonstrated a classical protocol that uses private coin flips to generate a fingerprint of size $\sim \sqrt{3N}$ bits that achieves a worst case error of at most $5/11 < 0.5$. This error can be suppressed exponentially by sending $M$ independent fingerprints to reduce the worst-case error to $\left(5/11\right)^M$. This quadratic improvement over naive fingerprinting is in fact optimal classically~\cite{newman1996public, kremer1999randomized}, i.e.~the best one can do classically is to represent $N$ classical bits with a fingerprint of size $\Theta\left( \sqrt{N} \right)$ bits.

However, quantumly it was shown~\cite{buhrman2001quantum} that a fingerprint of size $n = O\left(\log N \right)$ qubits can encode a file of $N$ bits. This is an exponential advantage over the best possible classical fingerprint. This advantage has received extensive attention and further theoretical analysis \cite{yao2003power, golynski2005note, gavinsky2006strengths, de2004one, gavinsky2010quantum}. In addition, quantum fingerprinting has been demonstrated experimentally on NMR qubits \cite{du2006experimental}, as well as on photonic devices \cite{xu2015experimental} by adapting the original techniques to optical implementations.

While the potential of quantum fingerprinting is exciting, the fingerprint states described in \cite{buhrman2001quantum} are challenging to realize experimentally. The core contribution of Supercheq-EE is to demonstrate that random circuit sampling reproduces the asymptotic behavior of quantum fingerprinting. While our randomized approach is less efficient in absolute terms, it is (a) motivated by quantum experiments that can be run today and (b) scales to allow a graceful tradeoff between fingerprint size and preparation cost. Meanwhile, Supercheq-IE takes an alternative approach based on (hyper)graph states, which enables an \emph{incremental} fingerprinting protocol that (exceeds) matches the scaling of the best possible classical protocol.

\subsection{Distinguishing Fingerprints}
\label{subsec:fidelity_estimation}
Common to all quantum fingerprinting protocols, we need a protocol for distinguishing differing quantum fingerprints (or validating that candidate fingerprints are identical). The tools we need are furnished by fidelity estimation protocols, i.e.~for estimating the fidelity of two fingerprint states $\left\lvert\bra{\psi_i}\ket{\psi_j}\right\rvert^2$, which we review below.

\begin{figure}[t]
\centering
    \begin{quantikz}[row sep=0.3cm, column sep=0.45cm]
    \lstick{$\ket{0}$} & \gate{H} & \ctrl{2} & \gate{H} & \meter{} \\
    \lstick{$\ket{\psi_i}$} & \qw & \swap{1} & \qw & \qw \\
    \lstick{$\ket{\psi_j}$} & \qw & \swap{0} & \qw & \qw
\end{quantikz}
\caption{Swap-test circuit to estimate $|\braket{\psi_i}{\psi_j}|$.}
\label{fig:swap_test}
\end{figure}

The most common protocol for estimating the fidelity
is the standard SWAP test~\cite{schuld2018supervised}. Quantum circuit diagram of SWAP test is depicted in Fig.~\ref{fig:swap_test}. In this protocol, one initializes an ancilla qubit to $\ket{+}$, which can be done by applying a Hadamard gate on state $\ket{0}$ as illustrated in the figure. This ancilla is used as the control qubit of a controlled-SWAP that swaps corresponding qubits between the two fingerprint states $\ket{\psi_i},\ket{\psi_j}$. If global interaction gates are available on the target hardware, the controlled-SWAP can be implemented with a constant number of global interactions, independent of the size of the fingerprint registers \cite{gokhale2021quantum}. Upon applying a Hadamard gate to the ancilla qubit, the probability $P_0$ of measuring $0$ on the ancilla is then equal to:
\begin{equation}
    P_0=\frac{1+\left\lvert\bra{\psi_i}\ket{\psi_j}\right\rvert^2}{2}.
\end{equation}
If one is given multiple copies of the states $\ket{\psi_i}^{\otimes M},\ket{\psi_j}^{\otimes M}$, one can repeat the standard SWAP test $M$ times to test the equality of $\ket{\psi_i}$ and $\ket{\psi_j}$ with high probability. Note that if the two fingerprints are identical (because the files they represent are identical), then the standard SWAP test leaves the original copies of the states undisturbed. Therefore, if a referee concludes that Alice and Bob's fingerprints are identical, the fingerprint states can be returned to Alice and Bob and \textit{recycled} for future usage. 

In the $M$-copy setting there also exists a \emph{collective} SWAP test using a derangement operator~\cite{buhrman2001quantum}. This collective measurement can distinguish the two fingerprints with quadratically fewer copies than otherwise needed, which asymptotically saturates information-theoretic bounds.
More recent work has further optimized the collective measurement scheme, demonstrated its robustness to noise, and provided variants with relaxed experimental requirements~\cite{fanizza2020swap}.

In the photonic realm, variations of the Hong--Ou--Mandel effect for implementations with bosons~\cite{PhysRevLett.59.2044} provide another mechanism for estimating fidelities. Finally, the destructive SWAP test~\cite{PhysRevA.87.052330} reformulates the standard SWAP test in a more gate-efficient fashion that only requires gates between corresponding qubit pairs of $\ket{\psi_i}$ and $\ket{\psi_j}$, at the cost of the measurement being destructive. An additional advantage of the destructive SWAP test is that it only requires Clifford gates; as such, the graph variant of Supercheq-IE can be implemented entirely as a Clifford circuit, when coupled with the destructive SWAP test.

\section{Supercheq-EE (Efficient Encoding)} \label{sec:Supercheq-ee}

The core intuition behind Supercheq-EE (Efficient Encoding) is that Hilbert space is vast and can accommodate a massive number of \emph{approximately} distinguishable state vectors into a modest number of qubits. The original quantum fingerprinting \cite{buhrman2001quantum} protocol achieves exponential advantage over the best-possible classical protocol ($\log N$ vs.~$\sqrt{N}$ scaling) in a \textit{descriptive} fashion: it identifies a set of $N$ state vectors such that each pair of distinct state vectors has fidelity (squared-inner-product) less than a constant. Then, with qubit count that scales only logarithmically in inverse-error, fingerprints can be distinguished with small one-sided error. This can be achieved through one of many quantum circuits estimating the fidelity, as referenced in Sec.~\ref{subsec:fidelity_estimation}.

Our protocol builds on the result in \cite{buhrman2001quantum} in three ways. First, Supercheq-EE is \emph{prescriptive} rather than descriptive---the encodings for input states are explicitly given by specific choices of random quantum circuits. Second, as a corollary, Supercheq-EE is well-matched to recent experimental demonstrations of random circuit sampling such as quantum supremacy and quantum volume. We will demonstrate that there exists a procedure that constructs efficient fingerprints via circuits constructed out of local gates in 1D chosen uniformly at random. In contrast, generic preparations of arbitrary superpositions as in the descriptive procedure require large multiply-controlled operations that remain challenging on near-term hardware. Third, Supercheq-EE scales gracefully in circuit depth. In particular, if we restrict the circuit depth for preparation of quantum fingerprints to be some $\poly(n^\ell)$, we can still encode $n^\ell$-bit files. This feature stems from recent results pertaining to efficient sampling from approximate unitary $t$-designs~\cite{brandao2016local,harrow2018approximate,Haferkamp2022randomquantum}.

The generic approach of Supercheq-EE is summarized in Protocol~\ref{alg:Supercheq-ee}. Our results hinge on the fact that with high probability, large numbers of samples from certain distributions of random states have low (e.g. at most $0.5$) maximum pairwise fidelity with one another. We give specific prescriptions for generating these random states in Secs.~\ref{subsec:haar_random} and~\ref{subsec:approx_unitary_t_design}, along with a heuristic procedure in Sec.~\ref{subsec:hw_eff}. Once states from this distribution are generated, the referee can successfully discriminate fingerprints with high probability via a small (scaling logarithmically in target inverse-error) number of copies of each state. For simplicity, Protocol~\ref{alg:Supercheq-ee} specifically outlines the use of the standard SWAP test, but the other procedures in  Sec.~\ref{subsec:fidelity_estimation} may be used alternatively.

One technical point is that Alice and Bob must share the same random circuit sampling protocol in performing Protocol~\ref{alg:Supercheq-ee} such that identical files map to the same fingerprint. One approach for achieving this is for Alice and Bob to have a shared random key that is unknown to third parties; however, this setting would necessitate secure key exchange and regardless already has an efficient classical protocol~\cite{kremer1999randomized}. Instead, Alice and Bob can achieve this same result in principle by fixing the randomness of the protocol in advance so that at runtime, the protocol is deterministic. In effect, this turns the Supercheq-EE protocol into a randomly generated lookup table mapping the bits of each possible input file to a specific circuit. The lookup table can be generated with true randomness, such as with the outcomes of a quantum random number generator. In Protocol~\ref{alg:Supercheq-ee}, we describe this procedure as possessing \emph{a randomly generated dictionary} $U\left(\cdot\right)$ where, for each $\cdot$, either independently and identically distributed (i.i.d.) Haar-random or i.i.d. approximate $t$-design circuits are chosen. While this approach rigorously satisfies the constraints of the simultaneous message passing setting, the lookup table would be impractically large; we suggest a more practical approach involving pseudorandomness later in Sec.~\ref{subsec:hw_eff}, which we then validate numerically in Sec.~\ref{sec:simulation}.

\SetKwFunction{NumCopiesNeeded}{NumCopiesNeeded}
\begin{algorithm}
\kwSetup{Alice and Bob both possess $N$-bit files, $A$ and $B$. Alice and Bob both possess randomly generated dictionary $U\left(\cdot\right)$ producing circuits given files $\cdot$, as described in Sec.~\ref{sec:Supercheq-ee}.}
\kwOutput{Referee reports $A \stackrel{?}{=} B$ up to one-sided worst-case error $\epsilon$.}
$M \leftarrow$ \NumCopiesNeeded{$\epsilon$}.\\
$\ket{\psi_A}^{\otimes M} \leftarrow \left(U\left(A\right)\ket{0}^{\otimes n}\right)^{\otimes M}$ \tcp*[l]{Alice's FPs}
$\ket{\psi_B}^{\otimes M} \leftarrow \left(U\left(B\right)\ket{0}^{\otimes n}\right)^{\otimes M}$ \tcp*[l]{Bob's FPs}
Alice and Bob transmit fingerprints (FPs) to a referee.\\
Referee performs $M$ SWAP tests between pairs of FPs.\\
\uIf{all tests output $\ket{0}$ ancilla measurement}{
Conclude that $A$ = $B$.
Referee can return $\ket{\psi_A}^{\otimes M}$ and $\ket{\psi_B}^{\otimes M}$ to Alice and Bob to be recycled.
}
\Else{
Conclude that $A\neq B$.
}

\caption{Supercheq-EE}
\label{alg:Supercheq-ee}
\end{algorithm}

\subsection{Exponential Advantage by Haar-Random Sampling} \label{subsec:haar_random}

The exponential-advantage invocation of Supercheq-EE is capitulated in the following Theorem, proven in Appendix~\ref{app:proofs}.
\begin{restatable}{theorem}{Supercheqeethm}
Suppose we Haar-randomly draw $1.4^{2^n}$ $n$-qubit states. The probability that any pair has fidelity exceeding 0.5 vanishes as $n \to \infty$.
\label{thm:haar_random}
\end{restatable}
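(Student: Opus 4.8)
The plan is to prove the statement by a union bound over all $\binom{K}{2}$ pairs of sampled states, where $K = 1.4^{2^n}$, combined with the fact that the overlap between two \emph{independent} Haar-random states concentrates near zero doubly-exponentially fast in $n$ (i.e. exponentially fast in the Hilbert-space dimension $d = 2^n$). Since the states in the dictionary $U(\cdot)$ are drawn i.i.d., the only two ingredients needed are a sharp tail bound on a single pairwise fidelity and a counting of the number of pairs; the base $1.4$ is then dictated by requiring the product to decay.

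The first step is to pin down the distribution of $F = \left\lvert\bra{\psi}\ket{\phi}\right\rvert^2$ for $\ket{\psi},\ket{\phi}$ independent and Haar-random on $\mathbb{C}^d$. By unitary invariance of the Haar measure, I would condition on $\ket{\phi}$ and treat $\ket{\psi}$ as a uniformly random unit vector, so that $F$ is distributed as the squared modulus of a single coordinate of a uniform point on the unit sphere in $\mathbb{C}^d$ — equivalently, as a ratio of one complex Gaussian magnitude to the total. This is the standard fact that $F \sim \mathrm{Beta}(1, d-1)$, with density $(d-1)(1-F)^{d-2}$ and hence clean tail $\Pr[F > t] = (1-t)^{d-1}$ for $t \in [0,1]$. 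In particular, with the threshold $t = 1/2$ appearing in the statement, $\Pr[F > 1/2] = 2^{-(d-1)} = 2 \cdot 2^{-2^n}$.

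The second step is the union bound. Writing $F_{ij}$ for the pairwise fidelities among the $K = 1.4^{2^n}$ sampled states, there are $\binom{K}{2} < \tfrac{1}{2}K^2 = \tfrac{1}{2}\cdot 1.96^{2^n}$ pairs, so
\begin{align*}
\Pr\!\left[\exists\, i \ne j : F_{ij} > \tfrac{1}{2}\right]
&\le \binom{K}{2}\cdot 2\cdot 2^{-2^n} \\
&< 1.96^{2^n}\cdot 2^{-2^n} = \left(0.98\right)^{2^n} \xrightarrow[n\to\infty]{} 0 .
\end{align*}
This is exactly where the constant enters: all that is required is $1.4^2 = 1.96 < 2$, so that the base of the final exponential is strictly below one; any base $c < \sqrt{2}$ works identically, and the choice $1.4$ is just a convenient round number.

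I do not expect a genuine obstacle here — the argument is essentially a one-line second-moment/union-bound computation. The only points demanding care are (i) cleanly justifying the closed form $\Pr[F > t] = (1-t)^{d-1}$ for the Haar overlap (via the Beta-distribution description of marginals on the complex sphere, or via a ratio of complex Gaussians), and (ii) book-keeping the constants so that the pair count $\binom{K}{2}$ times the per-pair probability $2^{-(2^n-1)}$ still leaves a ratio strictly less than $1$, which it does with room to spare since $1.96/2 = 0.98$. It is worth remarking that the same computation with threshold $\delta \in (0,1)$ in place of $1/2$ shows that $c^{2^n}$ states with $c < 1/\sqrt{1-\delta}$ have all pairwise fidelities below $\delta$ with probability $\to 1$; this is the quantitative content of the claim in the main text that Hilbert space accommodates exponentially many approximately-distinguishable fingerprints, and it is what makes SupercheQ-EE's $n = O(\log N)$ scaling possible.
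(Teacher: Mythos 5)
Your proposal is correct and takes essentially the same route as the paper's own proof in Appendix~\ref{app:proofs}: identify the pairwise fidelity of independent Haar-random states as $\mathrm{Beta}(1,2^n-1)$ with tail $\Pr[F>1/2]=2^{-(2^n-1)}$, then union-bound over the $\binom{K}{2}$ pairs to obtain a bound of order $(1.96/2)^{2^n}=0.98^{2^n}\to 0$. The two arguments differ only in trivial bookkeeping of constants, and your closing remark about general thresholds $\delta$ and bases $c<1/\sqrt{1-\delta}$ is a correct (if unneeded) strengthening.
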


This result can be understood intuitively as a consequence of the birthday paradox---specifically because its square-root scaling is benign against an exponential advantage. In particular, we know through \cite{buhrman2001quantum} that it is possible to encode $2^{2^n}$ states~\footnote{Beware the double exponential; only one of the exponentials should be noteworthy because $n$ classical bits already naturally encode $2^n$ length-$n$ bitstrings.} (each corresponding to a length-$2^n$ bitstring) into $n$ qubits, such that the maximum fidelity between two states is smaller than a constant.

Intuitively, this means that the Hilbert space of $n$ qubits has $2^{2^n}$ approximately distinguishable containers. If we can draw from these containers in a Haar-random fashion, how many times can we draw before a collision? On the one hand, it would be extremely unlikely (though technically possible) that $2^{2^n}$ random draws would perfectly land in separate containers. However, the birthday paradox dictates that for $o\left( \sqrt{2^{2^n}} \right)$ draws, for example $1.4^{2^{n}}$, the collision probability approaches 0 in the limit of large $n$. This intuitive reasoning is formally proven in Appendix~\ref{app:proofs}. Though this randomized approach is less efficient than \cite{buhrman2001quantum} by a square root factor, it retains an exponential advantage in communication complexity over the optimal classical protocols.

Realizing an $n$-qubit Haar-random unitary (i.e.~from $\operatorname{SU}\left(2^n\right)$) requires $\sim\exp(n)$ cost in both the number of two-qubit gates and in classical random bits \cite{knill1995approximation}. This may be tolerable because Alice and Bob's files each comprise $\sim 2^n$ bits, so in this sense the fingerprint preparation takes time polynomial in the size of the file. However, it motivates the study of approximate Haar-random states, which can be achieved in much shallower circuit depth for any given $n$ and generally requires less entanglement.

\subsection{Polynomial Advantage by Approximate Unitary $t$-Design Sampling} \label{subsec:approx_unitary_t_design}
We now consider the preparation of quantum fingerprints via \emph{approximate unitary $t$-designs}. Informally, these are distributions over $\operatorname{SU}\left(2^n\right)$ whose first $t$ moments approximately equal to those of the Haar distribution. We give a more formal definition in Appendix~\ref{app:proofs}. These classes of random circuits are important theoretically as well as practically; previous proposals of unitary $t$-designs include creating secure quantum channels~\cite{hayden2004randomizing} and quantum physical unclonable functions~\cite{kumar2021efficient}. There exist known methods for efficiently sampling from approximate unitary $t$-designs via choosing local gates on a $d$-dimensional connectivity graph uniformly at random~\cite{brandao2016local,harrow2018approximate,Haferkamp2022randomquantum}. For simplicity, we here restrict to the 1D case.

Our key result is that this weaker notion of randomness suffices for generating fingerprints that are more memory-efficient than any classical fingerprint, proved in Appendix~\ref{app:proofs}.

\begin{restatable}{theorem}{Supercheqeetdesthm}
Let $\ell\geq 1$ be constant. There exists a class of 1D nearest-neighbor random circuits on $n$ qubits of depth $O\left(n^{5.01\ell-4.01}\right)$ such that by i.i.d. randomly drawing $1.4^{n^\ell}$ states prepared from this class, the probability that any pair has fidelity exceeding $0.5$ vanishes as $n\to\infty$.\label{thm:t_design}
\end{restatable}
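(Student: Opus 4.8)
The plan is to follow the same birthday-paradox logic as the proof of Theorem~\ref{thm:haar_random}, but to replace the exact tail bound on the fidelity of two Haar-random states (which is unavailable for a $t$-design) with a Markov bound on a high enough moment of the fidelity. Concretely, I would (i) bound $\E[\,\lvert\langle\psi_i|\psi_j\rangle\rvert^{2t}\,]$ when $\ket{\psi_i}$ and $\ket{\psi_j}$ are prepared from two independent copies of a \emph{relative-error} $\epsilon$-approximate unitary $t$-design, (ii) convert this into a bound on $\Pr[\lvert\langle\psi_i|\psi_j\rangle\rvert^2 > 1/2]$ via Markov's inequality and then union bound over the $\binom{1.4^{n^\ell}}{2}$ pairs, and (iii) insert the known 1D depth bounds for approximate $t$-designs with $t$ and $\epsilon$ tuned so that the failure probability vanishes while the depth lands at $O(n^{5.01\ell-4.01})$.

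For step (i): writing $\ket\psi = U\ket{0}^{\otimes n}$ and letting $\Phi^{(t)}(\cdot) = \E[U^{\otimes t}(\cdot)(U^\dagger)^{\otimes t}]$ be the $t$-fold twirl of the circuit ensemble, one has $\lvert\langle\psi_i|\psi_j\rangle\rvert^{2t} = \operatorname{Tr}[(\ket{\psi_i}\bra{\psi_i})^{\otimes t}(\ket{\psi_j}\bra{\psi_j})^{\otimes t}]$, so taking expectations over the two independent draws gives $\operatorname{Tr}[A^2]$ with $A = \Phi^{(t)}((\ket{0}\bra{0})^{\otimes t}) = \E[(U\ket{0}\bra{0}U^\dagger)^{\otimes t}]$. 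Under the Haar measure $A$ equals $\Pi_{\mathrm{sym}}/\binom{2^n+t-1}{t}$, where $\Pi_{\mathrm{sym}}$ projects onto the symmetric subspace of $(\mathbb{C}^{2^n})^{\otimes t}$; for a relative-error $\epsilon$-approximate $t$-design, positivity of the map $(1+\epsilon)\Phi^{(t)}_{\mathrm{Haar}} - \Phi^{(t)}_\nu$ applied to the positive operator $(\ket{0}\bra{0})^{\otimes t}$ gives $A \preceq (1+\epsilon)\Pi_{\mathrm{sym}}/\binom{2^n+t-1}{t}$. Since $A \succeq 0$ with unit trace and $\Pi_{\mathrm{sym}} \preceq I$, this yields $\E[\,\lvert\langle\psi_i|\psi_j\rangle\rvert^{2t}\,] = \operatorname{Tr}[A^2] \leq (1+\epsilon)/\binom{2^n+t-1}{t} \leq (1+\epsilon)\,t!\,/\,2^{nt}$. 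This is the only point at which the $t$-design property (rather than full Haar randomness) is invoked, and the \emph{relative}-error formulation is exactly what lets us take $\epsilon$ to be a fixed constant instead of inverse-exponentially small.

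For steps (ii)--(iii): Markov gives $\Pr[\lvert\langle\psi_i|\psi_j\rangle\rvert^2 > 1/2] = \Pr[\lvert\langle\psi_i|\psi_j\rangle\rvert^{2t} > 2^{-t}] \leq 2^t(1+\epsilon)\,t!\,/\,2^{nt}$, so a union bound over at most $1.96^{n^\ell}/2$ pairs bounds the failure probability by $\tfrac{1+\epsilon}{2}\cdot 1.96^{n^\ell}\,2^t\,t!\,2^{-nt}$. Taking base-$2$ logarithms, this tends to $-\infty$ provided $nt$ exceeds $(\log_2 1.96)\,n^\ell + t\log_2 t + t \approx 0.971\,n^\ell + t\log_2 t + t$ by a growing margin; because $\log_2 1.96 < 1$, the choice $t = \lceil n^{\ell-1}\rceil$ works, since then $nt \geq n^\ell$ while $t\log_2 t + t = O(n^{\ell-1}\log n) = o(n^\ell)$, leaving a $\Theta(n^\ell)$ surplus in the exponent. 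Finally, invoking the 1D nearest-neighbor constructions of~\cite{brandao2016local,harrow2018approximate,Haferkamp2022randomquantum}, which produce relative-error $\epsilon$-approximate $t$-designs (for our fixed $\epsilon$) in depth $O(n\,t^{5+o(1)})$, and substituting $t = \lceil n^{\ell-1}\rceil$ gives depth $O(n^{\,1 + (5+o(1))(\ell-1)}) = O(n^{\,5\ell-4+o(1)})$, which is $O(n^{5.01\ell-4.01})$ once $n$ is large enough that the $o(1)$ in the exponent drops below $0.01(\ell-1)$ (and is simply $O(n)$ when $\ell=1$, where $t \equiv 1$).

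The main difficulty here is accounting rather than conceptual. First, one must confirm that the notion of approximate $t$-design underlying the cited depth bounds is the relative-error one, so that a constant $\epsilon$ suffices: with an additive-error bound the cross terms in $\operatorname{Tr}[A^2]$ would be $\Theta(\epsilon)$ rather than $\Theta(\epsilon/\binom{2^n+t-1}{t})$, forcing $\log_2(1/\epsilon) = \Theta(n^\ell)$ and blowing the depth up to $\sim n^{6\ell-4}$. Second, one must track the $t^{5+o(1)}$ exponent of~\cite{Haferkamp2022randomquantum} through the substitution $t = \lceil n^{\ell-1}\rceil$ carefully enough to see that it fits under $n^{5.01\ell-4.01}$, the small $0.01$ slack being precisely what absorbs the sub-polynomial correction. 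It is also worth stating explicitly why this argument cannot recover the exponential advantage of Theorem~\ref{thm:haar_random}: reproducing the tail bound $\Pr[\lvert\langle\psi_i|\psi_j\rangle\rvert^2 > 1/2] \approx 2^{-2^n}$ that the Haar case exploits to support $1.4^{2^n}$ states would require matching moments up to order $t = \Theta(2^n)$, and since the $t$-design depth is polynomial in $t$, only $\poly(n)$-bit fingerprints---equivalently $O(n^\ell)$-bit files---remain reachable at polynomial depth.
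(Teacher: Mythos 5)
Your proof is correct and follows the same skeleton as the paper's: bound the $t$-th moment of the fidelity using the design property, apply Markov's inequality at threshold $1/2$, union bound over the $\binom{K}{2}$ pairs with $K=1.4^{n^\ell}$, set $t=\lceil n^{\ell-1}\rceil$, and invoke the 1D depth bound of \cite{Haferkamp2022randomquantum}. The one genuine difference is how the design approximation error is handled. You use the \emph{relative-error} definition with constant $\epsilon$ and the operator inequality $\E\left[(U\ket{0}\bra{0}U^\dagger)^{\otimes t}\right]\preceq(1+\epsilon)\Pi_{\mathrm{sym}}/\binom{2^n+t-1}{t}$, giving the clean moment bound $(1+\epsilon)\,t!/2^{nt}$; the paper instead uses the monomial-measure (additive) definition of \cite{harrow2018approximate} with $\epsilon=2^{-2n^\ell}$ and bounds the moment by $t^t/2^{nt}+2^{nt}\epsilon$ via counting the $2^{2nt}$ monomial terms. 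Your version is tidier and slightly tighter ($t!\leq t^t$), at the cost of having to verify that the cited depth bound really is stated for the relative-error notion (it is). One correction to your closing commentary, though: your claim that the additive-error route would force the depth up to $\sim n^{6\ell-4}$ is wrong. The depth formula $O\left(t^{4+o(1)}\left(nt+\log\epsilon^{-1}\right)\right)$ is \emph{additive} in $\log\epsilon^{-1}$, and since $nt=n^\ell$ already, taking $\log_2\epsilon^{-1}=\Theta(n^\ell)$ costs nothing asymptotically---this is precisely what the paper does, and it lands at the same $O\left(n^{5.01\ell-4.01}\right)$. So the relative-error formulation is a convenience here, not a necessity.
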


Finally, we wish to comment on the cryptographic perspectives of Supercheq-EE with the help of approximate $t$-designs. A Haar-random unitary is computationally hard to prepare even for quantum devices. Since approximate $t$-designs are achievable by local random circuits, they are still distinct from Haar-random circuits from the computational complexity perspectives. In fact, one could define \emph{psedorandom quantum states} \cite{ji2018pseudorandom}, which are quantum states that cannot be distinguished from Haar randomness within polynomial operations. Fingerprinting made from pseudorandom quantum states might be significantly robust from potential hacks. However, there are deep connections between approximate $t$-designs and pseudorandom quantum states that are still developing in progress (see \cite{ananth2022cryptography,bouland2022quantum,laracuente2026approximate,doi:10.1126/science.adv8590}). Moreover, there are also important problems about whether local random circuits will scramble linearly towards approximate $t$-designs. The property of linear scrambling of local random circuits is a deep question relating to black hole physics, complexity and cryptography \cite{hayden2007black,bouland2019computational,brandao2021models}. It is claimed that local random circuits will both scramble linearly in $t$ and the number of qubits, which, will at the same time, provide a linear complexity growth \cite{hunter2019unitary,brandao2021models,haferkamp2022linear,liu2022estimating,ma2024construct,metger2024simple,doi:10.1126/science.adv8590}. We leave the questions between pseudorandom quantum states, complexity growth, and our quantum fingerprinting for future research.

\subsection{Empirical Advantage by Hardware-Efficient Sampling} \label{subsec:hw_eff}

We now discuss relaxations of the protocol described in Protocol~\ref{alg:Supercheq-ee} to aid in practical implementations of Supercheq-EE. We will show in simulations in Sec.~\ref{sec:simulation} that in practice, these relaxations achieve low overlap between distinct quantum fingerprints, as desired.

First, we consider shallower classes of quantum random circuits than those required in Theorem~\ref{thm:t_design} for generating circuits from an approximate $t$-design. The proof of Theorem~\ref{thm:t_design} relies on the best-known convergence results for sampling from such a distribution in 1D, which gives a depth of $O\left(t^{4.01}\left(nt+\log\left(\epsilon^{-1}\right)\right)\right)$ for sufficiently large $n$ when $t=O\left(\poly\left(n\right)\right)$\footnote{Unfortunately, as we require extremely tight errors---i.e., $\epsilon=\exp\left(-\operatorname{poly}\left(n\right)\right)$---we are outside of the regime of more efficient recent developments in implementing $t$-designs~\cite{10756150,doi:10.1126/science.adv8590,schuster2025strongrandomunitariesfast}.}~\cite{Haferkamp2022randomquantum}. This is achieved via choosing uniformly at random nearest-neighbor gates in 1D in a ``brickwork'' pattern, i.e.~alternating between gates on even pairs and odd pairs. However, it is conjectured that such circuits scramble to $\epsilon$-approximate $t$-designs in depth $O\left(nt+\log\left(\epsilon^{-1}\right)\right)$~\cite{brandao2016local, hunter2019unitary}. Assuming this scaling, the circuit depth polynomial in Theorem~\ref{thm:t_design} reduces to $O\left(n^\ell\right)$. Furthermore, the upper-bounds on the fidelity in Theorems~\ref{thm:haar_random} and~\ref{thm:t_design} are proven using standard worst-case bounds that may potentially be loose; if so, the required circuit depths needed in practice may be even further reduced.

Second, though the results of \cite{Haferkamp2022randomquantum} require specific classes of random quantum circuits in 1D, we conjecture that any ``sufficiently random'' local quantum circuit yields sufficient scrambling for low fidelity between different fingerprints. We refer to such circuit families as the Hardware-Efficient variant of Supercheq, in which a practitioner would pick a scrambling circuit based on the the available connectivity and gateset of target hardware. We investigate this approach with a concrete example in Sec.~\ref{sec:simulation}, where we consider a class of random quantum circuits suited for typical neutral atom (cold atom) systems such as Infleqtion's Sqale neutral atom quantum computer \cite{radnaev2025universal, bedalov2024fault, rines2025demonstration}. In future work, we suggest considering global interaction gatesets available on atomic systems, such as the Rydberg blockade which can be used to perform entangling gates spanning 3+ qubits simultaneously \cite{isenhower2011multibit}.

Finally, we relax the cryptographic assumption of true randomness (e.g. by quantum random number generation) that was used to assure the fingerprinting performance and security of Theorems~\ref{thm:haar_random} and~\ref{thm:t_design}. This assumption would be costly in practice, because it would require an inefficient lookup-table description of the fingerprinting protocol for any instance of randomness. Instead, we will use the file to be fingerprinted as a seed for pseudorandomly constructing quantum fingerprinting circuits. Empirical tests that validate the performance of this relaxation are presented in Sec.~\ref{sec:simulation}.

\section{Supercheq-IE (Incremental Encoding)} \label{sec:Supercheq_ie}

One limitation of previous fingerprinting protocols (including Supercheq-EE) is the inability to perform \emph{incremental updates} on fingerprints. In particular, if a fingerprint is generated for a file, and the file is subsequently modified by even a single bit flip, typically the fingerprint for the updated file must be generated from scratch.

With this motivation, we now describe Supercheq-IE, a protocol for generating quantum fingerprints that can be incrementally updated. Specifically, if file $A$ is modified by incremental (constant-space) changes to yield file $A'$, then the corresponding fingerprint $\ket{\psi_A}$ of $A$ can be modified with constant cost to obtain the fingerprint $\ket{\psi_{A'}}$ of file $A'$. There are two subvariants to Supercheq-IE: graph and hypergraph. For simplicity, we begin by focusing on the graph variant where Supercheq-IE achieves a communication complexity of $\Theta\left(\sqrt{N}\right)$ qubits in the SMP setting, matching the optimal classical protocol; however, to the best of our knowledge no incremental classical protocol achieves this bound without cryptographic assumptions. For example, previous incremental approaches \cite{bellare1994incremental} have relied on the hardness of discrete logarithm---which in fact has an efficient quantum algorithm \cite{shor1999polynomial}. Subsequent work \cite{phan2006security} has noted other security issues with attempting to construct incremental hash functions; Supercheq-IE averts these issues by relying on an exact one-to-one mapping from files to quantum fingerprints.

\begin{algorithm}
\SetKw{KwTo}{to}
\SetKwFunction{AddEdge}{AddEdge}
\kwSetup{Alice and Bob both possess $N$-bit files, A and B.}
\kwOutput{Referee reports $A \stackrel{?}{=} B$ up to one-sided worst-case error $\epsilon$.}
$n\gets \lceil (\sqrt{8N + 1} + 1)/2 \rceil$ \\
$G_\text{Alice} \gets \text{n-vertex graph}$ \tcp*[l]{(repd by nxn adj mat)}
\For{$i \leftarrow 0$ \KwTo $N$}{
\For{$row \leftarrow 1$ \KwTo $n$}{
\For{$col \leftarrow 0$ \KwTo $row$}{
\If{A[i]}{\AddEdge{$G_\text{Alice}$, row, col};\\}
i++;
}}}
Alice produces $n$-qubit graph state $\ket{\psi_A}$ corresponding to $G_\text{Alice}$: she initializes each qubit to $\ket{+}$ and then applies $\operatorname{CZ}(i, j)$ for every edge $(i, j)$.\\
Alice repeats this $M$ times to produce $\ket{\psi_A}^{\otimes M}$.\\
Bob similarly performs the above to produce $\ket{\psi_B}^{\otimes M}$\\
Alice and Bob transmit fingerprints to a referee.\\
Referee performs $M$ SWAP tests between pairs of FPs.\\
\uIf{all tests output $\ket{0}$ ancilla measurement}{
Conclude that $A$ = $B$.
Referee can return $\ket{\psi_A}^{\otimes M}$ and $\ket{\psi_B}^{\otimes M}$ to Alice and Bob to be recycled.
}
\Else{
Conclude that $A\neq B$.
}

\caption{Supercheq-IE (graph subvariant)}
\label{alg:Supercheq_ie}
\end{algorithm}

The essential ingredients for the graph subvariant of Supercheq-IE are twofold. First, we demonstrate that any $N$-bit file can be encoded into an $n$-qubit \textit{graph state} (defined later), where $n$ scales as $\sim \sqrt{2N}$. We refer to this as the file-to-graph encoding, described in full below. Second, we observe that distinct qubit graph states have bounded fidelity. Namely, for two distinct graph states $\ket{\psi_A},\ket{\psi_B}$, $\left\lvert\bra{\psi_A}\ket{\psi_B}\right\rvert^2\leq\frac{1}{2}$~\cite{PhysRevA.70.052328}, allowing efficient equality testing (with low one-sided error) via one of the SWAP tests described in Sec.~\ref{subsec:fidelity_estimation}.

Taken together, these two ingredients lead to a natural protocol for quantum fingerprinting. Alice and Bob encode their $N$-bit files $A$ and $B$ into $M \in O\left(\log\left(\epsilon^{-1}\right)\right)$ copies of graph states, each with $O\left(\sqrt{N}\right)$ qubits. Thus, with $O\left(\sqrt{N}\log\left(\epsilon^{-1}\right)\right)$ qubits of communication, a referee is able to determine whether $A$ and $B$ are equal with probability at least $1-\epsilon$ via $O\left(\log\left(\epsilon^{-1}\right)\right)$ standard SWAP tests. The full procedure for the graph subvariant of Supercheq-IE is summarized in Protocol~\ref{alg:Supercheq_ie}.

The graph subvariant should be viewed as the simplest instance of a broader incremental-encoding principle. In the next subsection, we generalize the construction from ordinary graph states to hypergraph states, where each file bit is encoded by the presence or absence of a higher-order hyperedge rather than a pairwise edge. This extension preserves the key incremental-update property, since a single bit flip is still implemented by a constant-size controlled-phase operation, while allowing the fingerprint size to scale as $O(N^{1/\ell})$ for $\ell$-uniform hypergraphs. This yields arbitrarily large polynomial improvements in communication for fixed $\ell$.

\subsection{File-to-Graph Encoding}

The file-to-graph encoding, depicted in  Fig.~\ref{fig:incremental_encoding_overview}, begins with the observation that any $N$-bit file can be encoded in row-major order as the entries of a strictly lower triangular $n\times n$ matrix. The matrix dimension $n$ is chosen such that the number of available positions in the lower triangle is sufficient to store all $N$ bits yielding, $n=\left\lceil\frac{1}{2}\left(\sqrt{8N+1}+1\right)\right\rceil$ which scales as $\sim \sqrt{2N}$. Adding this matrix to its transpose produces a symmetric matrix that can then be interpreted as the adjacency matrix of a graph $G$, with $n$ vertices. The graph $G$ can then be associated with a qubit \emph{graph state} on $n$ qubits~\cite{PhysRevA.68.022312}. The graph's adjacency matrix $G_{ij}$ defines a state stabilized by the Pauli strings
\begin{equation}
    S_i=X_i\prod\limits_{j\neq i} Z_j^{G_{ij}}
\end{equation}
for $i=1,\ldots,n$. Next, we summarize the full circuit implementation of the Supercheq-IE graph subvariant, including the construction of the corresponding graph state.

\begin{figure}
    \centering
    \includegraphics[width=0.4\textwidth]{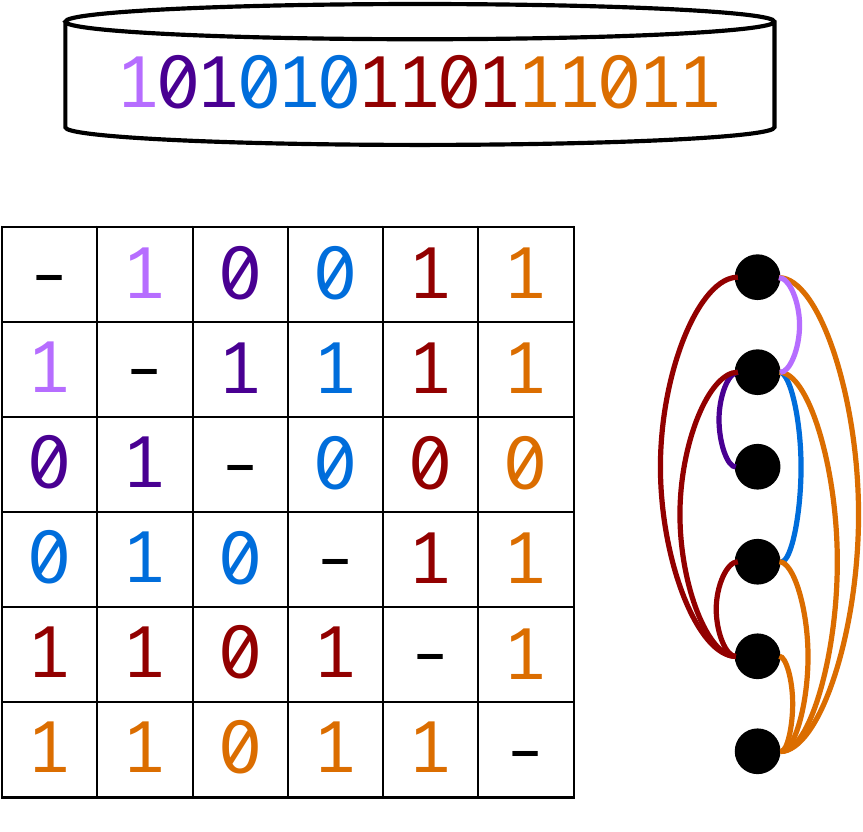}
    \caption{Overview of Supercheq-IE's file-to-graph encoding. In this example, an $N=15$-bit file is encoded into a graph with $n=6$ vertices (asymptotically, $\sim\sqrt{2N}$). The corresponding 6-qubit graph state is the Supercheq-IE fingerprint.}
    \label{fig:incremental_encoding_overview}
\end{figure}

\begin{figure}
    \centering
    \includegraphics[width=0.5\textwidth]{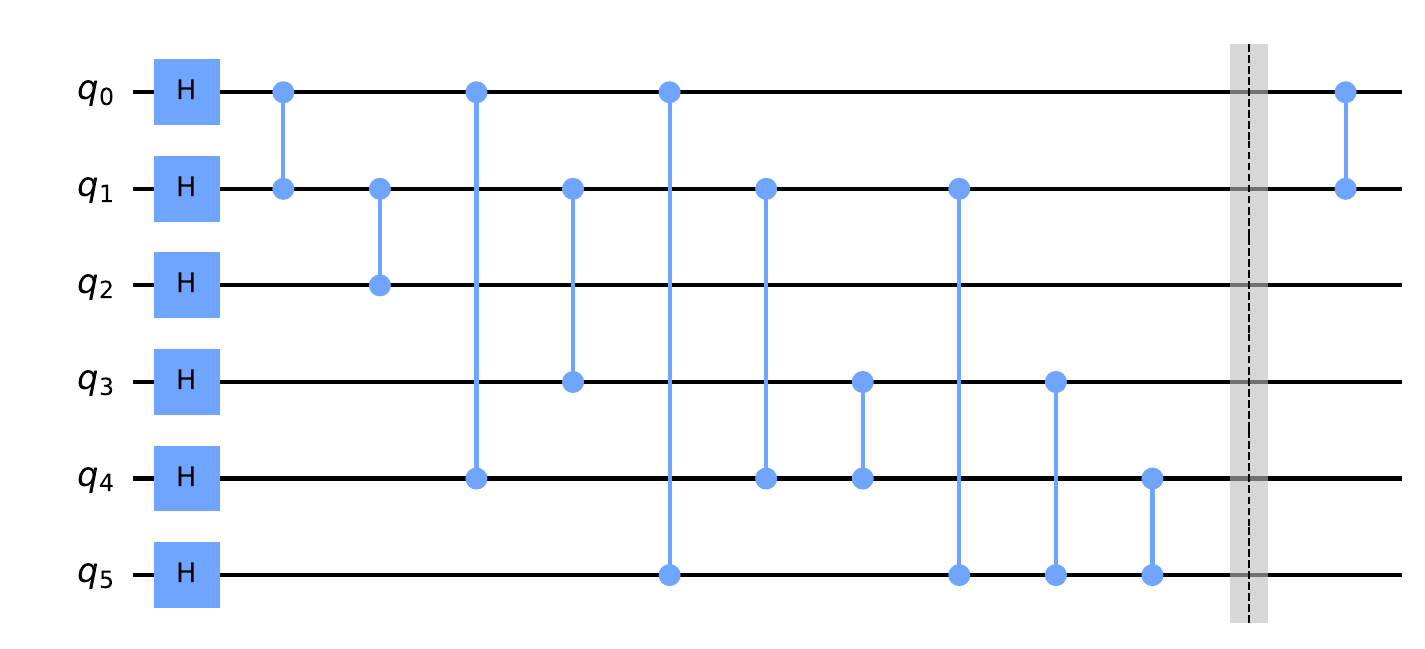}
    \caption{Quantum circuit for graph subvariant of Supercheq-IE, demonstrating an incremental bit flip to the original file \texttt{101010110111011} from Fig.~\ref{fig:incremental_encoding_overview}. Each of the \texttt{1}'s in the original file corresponds to a $\operatorname{CZ}$ gate between two qubits. The last $\operatorname{CZ}$, after the barrier, models the incremental update that would be necessary if the first bit of the file was flipped from from \texttt{1} to \texttt{0}.}
    \label{fig:Supercheq_ie_circuit}
\end{figure}

\subsection{Quantum Circuit Implementation}
Fig.~\ref{fig:Supercheq_ie_circuit} depicts a full example of Supercheq-IE's quantum circuit encoding. As also referenced in Protocol~\ref{alg:Supercheq_ie}, a graph state can be created by applying the Hadamard gate to each qubit, followed by a $\operatorname{CZ}$ corresponding to any row-column pair of the adjacency matrix that is 1 (i.e. corresponding to any edge in the graph). Subsequently, any bit flip to the source file can be performed with a single $\operatorname{CZ}$ to the appropriate pair of qubits. Because $\operatorname{CZ}$ gates commute with each other, the updated graph state faithfully reflects the updated file.

Finally, when requested, Alice and Bob's graph state fingerprints can be sent to a referee. As in Supercheq-EE, the referee can use any of the protocols in Sec.~\ref{subsec:fidelity_estimation} to check whether Alice and Bob's files are identical. If so, the fingerprints (graph states) can be recycled and returned to Alice and Bob. Recycling is particularly well-matched to the incremental scenario: if any updates were made to Alice's source file while the referee was examining fingerprints, she can apply those updates when she receives the recycled fingerprint instead of needing to generate it from scratch.

Note that supporting bit flips is also sufficient to support general writes, which can be implemented as a read on the source file followed by a conditional flip. For instance, setting some bit to 0 is equivalent to reading it and flipping it iff the bit is currently 1. In addition, Supercheq-IE can support scenarios where the source is resized to a larger file, simply by adding a new qubit. This is equivalent to adding a new vertex to the graph state and therefore a new row to the adjacency matrix. We also note the possibility of supporting yet more exotic incremental updates. For example, one can use two ancilla qubits to complement all edges between two vertex subsets in time scaling only linearly in the number of qubits in the sets, rather than in the number of edges; said otherwise, two ancilla qubits can be used to complement a block of the graph's adjacency matrix in a time that is linear in the dimensions of the block, rather than its volume~\cite{PhysRevA.93.032314}. In summary, between individual bit flips, writes (by read-and-conditional-flip), resizing, and even more exotic operations, Supercheq-IE can support incremental updates for quite general changes to source files.

In a real-world scenario, we envision that Alice and Bob could maintain graph states fingerprints corresponding to their databases. As updates are made to their corresponding databases, Alice and Bob can perform constant-cost incremental updates to their graph states, possibly with modest batching for efficiency or to leverage available parallelism while executing multiple $\operatorname{CZ}$ gates.

As noted earlier, the encoding circuit of the graph subvariant of Supercheq-IE, as well as the destructive SWAP test for distinguishing fingerprints, are both Clifford circuits. In the context of fault-tolerant implementations, this is significant because Clifford gates---which do not need costly magic state distillation---are much cheaper to implement than non-Clifford gates. For instance, a Toffoli (non-Clifford gate) is roughly 100x slower than a Clifford gate in the surface code by some estimates \cite{babbush2021focus}. As such, we propose that Supercheq-IE should be one of the first target applications for early fault-tolerant devices.

\subsection{Hypergraph Encoding}
The quadratic information-compression factor in Supercheq-IE's graph subvariant can be boosted to an arbitarily-high polynomial advantage by \textit{hypergraphs}, which generalize traditional graphs with \textit{hyperedges} that can contain more than two vertices. We use the notation $\ell$-hypergraph to denote the case where each hyperedge has cardinality $\ell$.

\begin{restatable}{theorem}{Supercheqiethm}
Let $x\in\{0,1\}^N$ be a classical file with $N$ bits. For any integer $\ell \geq 2$, there exists an encoding of $x$ as an $\ell$-hypergraph quantum state on $O(N^{1/\ell})$ qubits that supports constant-cost updates when individual bits of $x$ change. Compared to classical fingerprinting methods with best-case scaling of $O(\sqrt{N})$ \cite{kremer1999randomized, newman1996public}, Supercheq-IE's hypergraph encoding achieves polynomial improvement, $O(\sqrt[\ell]{N})$.
\label{thm:hypergraph-encoding}
\end{restatable}

The proof sketch is as follows. For a $\ell$-hypergraph with $n$ vertices, the number of possible hyperedges is $n$ choose $\ell$, which scales as $O(n^\ell)$. In the hypergraph, vertices correspond to qubits, and an $\ell$-hyperedge phase interaction encodes one bit of $x$. The $N$-bit string is partitioned onto indices corresponding to the hyperedges of the $\ell$-uniform hypergraph. Since the number of  $\ell$-hyperedges on $n$ vertices scales $O(n^\ell)$, choosing $n= N^{1/\ell}$ provides sufficient capacity to encode $N$ bits. Further, updates to individual bits are implemented by applying a $\ell$-qubit controlled phase operation, so the encoding supports constant-cost updates when $x$ changes. Thus, the encoding requires $O(\sqrt[\ell]{N})$ while preserving constant-cost incremental updates.

To make this construction precise, we now describe the file-to-hypergraph encoding used in Supercheq-IE. The key concepts for applying a file-to-hypergraph encoding in Supercheq-IE are the same: we associate each bit of a file with the presence or absence of a hyperedge $e$ containing more than two vertices. Rather than being represented by an adjacency matrix, a hypergraph is specified by its set of hyperedges $E$, where each hyperedge $e \in E$ is an unordered subset of vertices. For example, the hyperedge $e=\{i,j,k\}$ is invariant under permutations of its vertices, so $(i,j,k)$, $(k,j,i)$, and all other permutations correspond to the same hyperedge.

Then, given an $\ell$-hypergraph on $n$ vertices, we associate a hypergraph state on $n$ qubits that naturally generalizes the traditional graph state. In particular, we initialize the qubits in the state $\ket{+}^{\otimes n}$, which is obtained by applying Hadamard gates to qubits initialized in $\ket{0}^{\otimes n}$. Then, to create the hypergraph state $\ket{\psi_F}$ associated with a file $F$, we apply $\operatorname{C^{\ell-1}Z}(e)$ for each hyperedge $e \in E_F$. The operator $\operatorname{C^{\ell-1}Z}(e)$ is a multi-controlled $Z$ gate acting on the qubits associated with the vertices in $e$, with $\ell-1$ control qubits. Formally, we define the operator $G_F$ as the generator associated with the file $F$:
\begin{equation}
    \mathcal{G}_F = \prod_{e \in E_F} \operatorname{C}^{\ell-1}Z(e),
\end{equation}
so that
\begin{equation}
    \ket{\psi_F} = \mathcal{G}_F \ket{+}^{\otimes n}.
\end{equation}

Notice that for $\ell=2$, this recovers the traditional Supercheq-IE graph subvariant, where each edge $(i, j)$ has a corresponding $\operatorname{CZ}$ gate between qubits $i$ and $j$. For larger $\ell$, the necessary $\operatorname{C^{\ell-1}Z}$ gate comes from the $\ell-1$ level of the Clifford hierarchy. Using quantum error correction codes such as the hypercube code \cite{kubica2015unfolding, hangleiter2025fault}, these gates can be implemented cheaply via transversal physical operations.

Finally, we use a SWAP test to distinguish between two hypergraph states, $\ket{\psi_A}$ and $\ket{\psi_B}$, associated with files $A$ and $B$. As discussed before, the SWAP test is used to infer the squared overlap, $\left|\braket{\psi_A}{\psi_B}\right|^2$. The overlap between the two states is
\begin{equation}
    \braket{\psi_A}{\psi_B} = \bra{+}^{\otimes n} \mathcal{G}_A^\dagger \mathcal{G}_B \ket{+}^{\otimes n}.
\end{equation}
If the files $A$ and $B$ are identical, then $\mathcal{G}_B = \mathcal{G}_A$, and it follows that the state overlap is $1$. If the files are different, their overlap is less than $1$. The maximum overlap between two distinct states associated with two different files occurs when the files differ by only one bit. In that case, $\mathcal{G}_A$ and $\mathcal{G}_B$ differ by only a single multi-controlled $Z$ gate. That is,
\begin{equation}
    \mathcal{G}_B = \mathcal{G}_A \operatorname{C^{\ell-1}Z}(e_d),
\end{equation}
where $e_d$ is the hyperedge associated with the differing bit. In that case, we have
\begin{align}
    \braket{\psi_A}{\psi_B}
    &= \bra{+}^{\otimes n} \mathcal{G}_A^\dagger \mathcal{G}_A \operatorname{C^{\ell-1}Z}(e_d) \ket{+}^{\otimes n}, \notag \\
    &= \bra{+}^{\otimes n} \operatorname{C^{\ell-1}Z}(e_d) \ket{+}^{\otimes n},
\end{align}
which shows that the overlap is the expectation value of $\operatorname{C^{\ell-1}Z}(e_d)$ in the uniform superposition state $\ket{+}^{\otimes n}$. This expectation value is just the average of the diagonal entries of $\operatorname{C^{\ell-1}Z}$ on the $\ell$ qubits in $e_d$. Since this gate has eigenvalue $-1$ only on $\ket{1}^{\otimes \ell}$ and eigenvalue $+1$ on the remaining $2^\ell-1$ computational basis states, the average is $1- 1 /2^{\ell-1}$. For example, the maximum possible overlap is $75\%$ for distinct $3$-hypergraphs and $87.5\%$ for distinct $4$-hypergraphs. More generally, for any fixed $\ell$, distinct hypergraph states remain separated by a constant gap, which is enough to distinguish them with one-sided error exponentially small in the number of state copies.
For example, the maximum possible overlap is $75\%$ for distinct $3$-hypergraphs and $87.5\%$ for distinct $4$-hypergraphs. More generally, for any fixed $\ell$, distinct hypergraph states remain separated by a constant gap, which is enough to distinguish them with one-sided error exponentially small in the number of state copies.

To demonstrate a practical invocation of our hypergraph formulation of Supercheq-IE, we illustrate the encoding for a 20-bit file in Figure~\ref{fig:hypergraph_supercheq}. Since ${6 \choose 3} = 20$, we can represent the file with a 3-hypergraph on 6 vertices, and in turn, a 6-qubit state. Observe that a traditional graph on 6 vertices could only represent a ${6 \choose 2} = 15$ bit file. Moreover, this advantage scales aggressively for higher $N$. For instance, a 1 MB ($N = 8 \times 2^{30}$) file would require a 133k qubits with a graph state, but only a 50 qubits with a 10-hypergraph state. For completeness, Protocol~\ref{alg:supercheq_ie_hypergraph} explicitly describes the full procedure for Supercheq-IE with hypergraph encoding.

\begin{figure}
\centering
\includegraphics[width=0.5\textwidth]{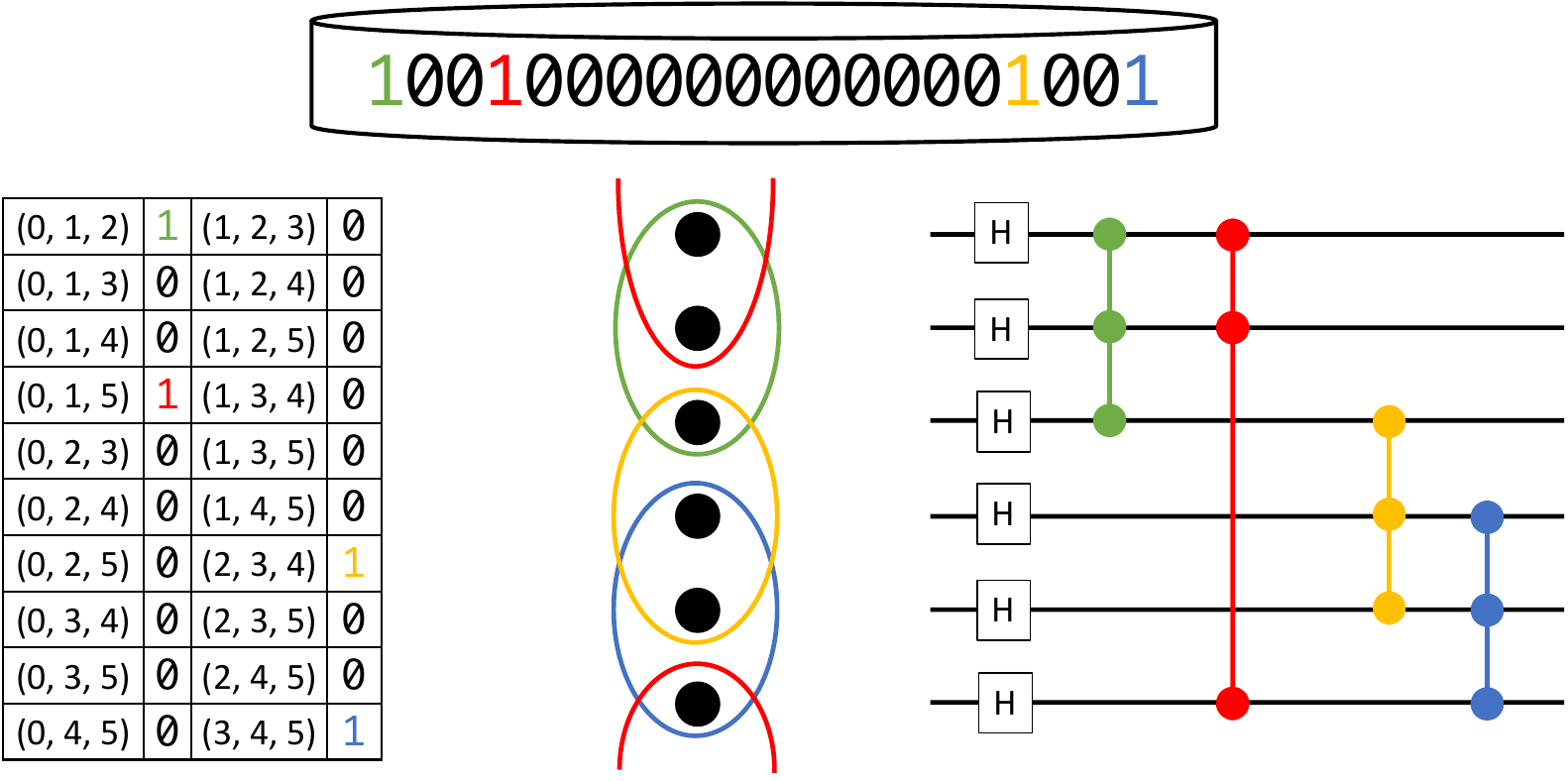}
\caption{Supercheq-IE hypergraph encoding. In this example, a 20-bit file is encoded into the ${6 \choose 3} = 20$ possible hyperedges of a 3-hypergraph on 6 vertexes. The corresponding hypergraph state preparation circuit on 6 qubits has a $\operatorname{CCZ}$ gate (per convention, depicted as three controls) for each of the four hyperedges.}
\label{fig:hypergraph_supercheq}
\end{figure}

\begin{algorithm}
\SetKw{KwTo}{to}
\SetKwFunction{AddHyperedge}{AddHyperedge}
\kwSetup{Alice and Bob both possess $N$-bit files, A and B. Agreed hyperedge cardinality is $\ell$.}
\kwOutput{Referee reports $(A \stackrel{?}{=} B)$ up to one-sided worst-case error $\epsilon$.}
$n\gets \text{smallest }n\in\mathbb{N}\text{ such that }\binom{n}{\ell}\geq N$ \\
$G_\text{Alice} \gets \text{empty $n$-vertex graph}$  \\
\For{$i \leftarrow 0$ \KwTo $n \choose \ell$}{
\If{A[i]}{\AddHyperedge{$G_\text{Alice}$, i\textsuperscript{th} sorted hyperedge};\\}
i++;
}
Alice produces $n$-qubit hypergraph state $\ket{\psi_A}$ corresponding to $G_\text{Alice}$: she initializes each qubit to $\ket{+}$ and applies $\operatorname{C^{\ell-1}Z}(e_1, ..., e_{\ell})$ for every edge $e$.\\
Alice repeats this $M$ times to produce $\ket{\psi_A}^{\otimes M}$.\\
Bob similarly performs the above to produce $\ket{\psi_B}^{\otimes M}$\\
Alice and Bob transmit fingerprints to a referee.\\
Referee performs $M$ SWAP tests between pairs of FPs.\\
\uIf{all tests output $\ket{0}$ ancilla measurement}{
Conclude that $A$ = $B$.
Referee can return $\ket{\psi_A}^{\otimes M}$ and $\ket{\psi_B}^{\otimes M}$ to Alice and Bob to be recycled.
}
\Else{
Conclude that $A\neq B$.
}

\caption{Supercheq-IE (hypergraph subvariant)}
\label{alg:supercheq_ie_hypergraph}
\end{algorithm}

\section{Simulation (Supercheq-EE)} \label{sec:simulation}

In Sec.~\ref{subsec:hw_eff}, we give heuristic reasons that shallow-depth local random quantum circuits beyond those considered in Theorem~\ref{thm:t_design} can achieve competitive fingerprinting efficiency. We also suggest using seeded pseudorandomness to obviate an impractically large, randomly generated lookup table for constructing fingerprints. Here, we validate these assumptions on a variety of random quantum circuits over multiple file sizes.

Supercheq-EE's high (up to exponential) encoding efficiency poses a computational challenge for simulations testing its performance. For example, if we have $n$ qubits, one could fingerprint roughly $1.4^{2^n}$ distinct files. Verifying that the inner product between pairs is small would incur a quadratic overhead on top of this double exponential, in the sense that we would need to compute roughly $(1.4^{2^n} \times 1.4^{2^n})/2$ inner products. While this scaling appears fundamentally unavoidable---owing to the high efficiency of Supercheq-EE---it motivates us to consider GPU-accelerated approaches.

\subsection{GPU-Accelerated Simulation}\label{subsec:noiseless_sim}
We leveraged GPU simulation for two core functions: state vector simulation and fidelity computation. Our simulations were performed with an NVIDIA A100 GPU.

\begin{figure*}
    \centering
    \subfloat[\label{subfig:schematic}]{%
        \includegraphics[width=0.85\textwidth]{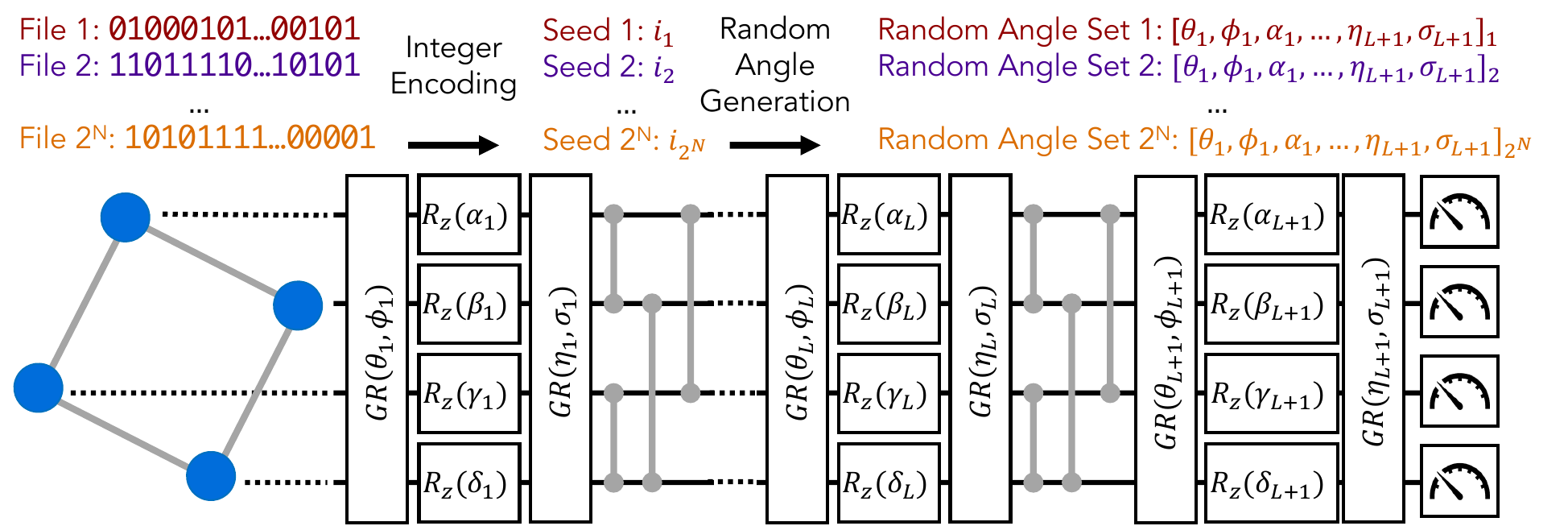}%
    }\hfill
    \subfloat[\label{subfig:nearest-neighbor}]{%
        \includegraphics[width=0.5\textwidth]{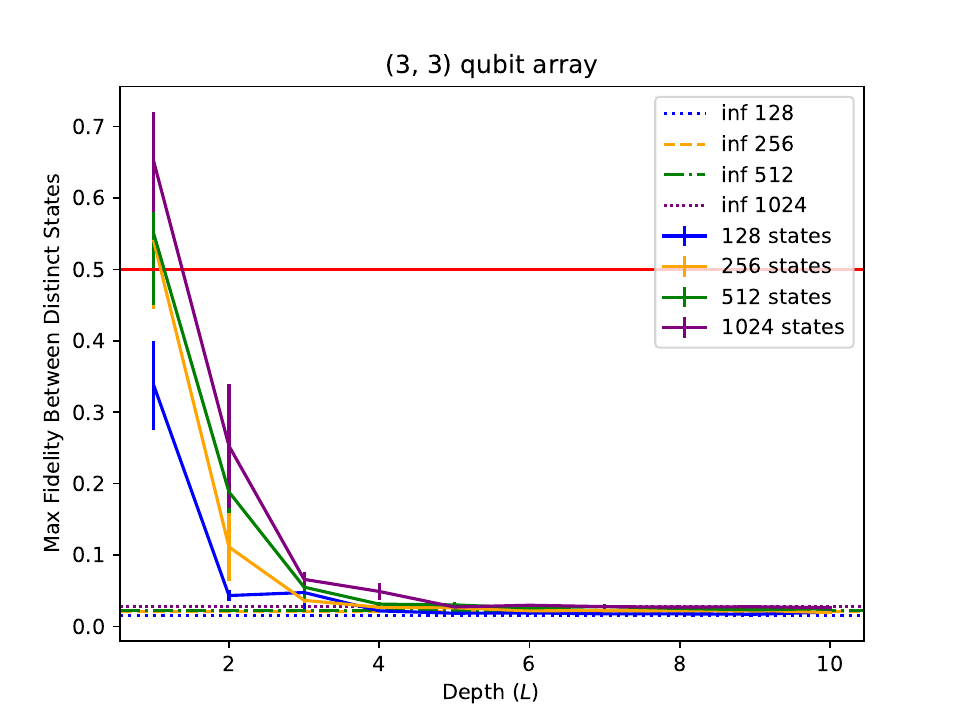}%
    }
    \subfloat[\label{subfig:fully-connected}]{%
        \includegraphics[width=0.5\textwidth]{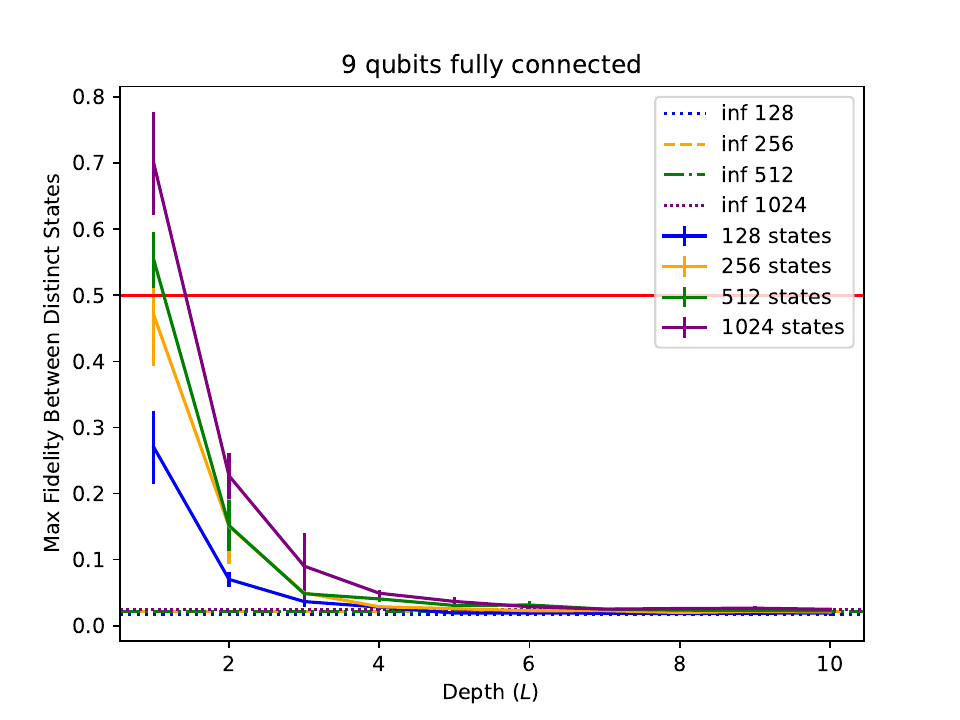}%
    }\hfill
    \caption{GPU-based simulations of Supercheq-EE. (a) Pseudorandom circuit models simulated. (b,c) The max fidelity (lower is better) between pairs of distinct fingerprints generated by the pseudorandom quantum circuits described in Sec.~\ref{subsec:noiseless_sim} is given for (b) 2D nearest-neighbor connectivity and (c) all-to-all connectivity. Error bars denote standard deviations over five trials. Dashed lines demonstrate the average performance of true Haar-random quantum states over five instances. The red line denotes an overlap of $\frac{1}{2}$ as a guide to the eye that matches the maximum overlap between distinct fingerprints in Supercheq-IE.
    \label{fig:noiseless_sims}}
\end{figure*}

First, to compute state vectors for fingerprints, we leverage the cuStateVec library in the cuQuantum SDK \cite{stanwyck2022cuquantum}. We first consider using shallow-depth local random quantum circuits to generate quantum fingerprints in the noiseless regime. In the language of Sec.~\ref{subsec:hw_eff}, we consider a hardware-efficient encoding circuit inspired by neutral atom systems. This encoding circuit is given by the layered application of a uniformly random global rotation gate:
\begin{equation}
\operatorname{GR}\left(\theta,\phi\right)=\exp\left(-\frac{i}{2}\sum\limits_{j=1}^n\left(\cos\left(\phi\right)X_j+\sin\left(\phi\right)Y_j\right)\right),
\end{equation}
uniformly random single-qubit $Z$ rotations, another uniformly random global rotation gate, and pairs of $\operatorname{CZ}$ gates according to various qubit connectivities. We then consider a final application of the random single-qubit gates. We pseudorandomly choose each rotation angle via seeding the NumPy~\cite{harris2020array} pseudorandom number generator with the $N$-bit file to be fingerprinted. We give an overview of this practical implementation of Supercheq-EE in Fig.~\ref{fig:noiseless_sims}(a). After evolving under the random quantum circuit, we store cuStateVec output fingerprints into the rows of a large matrix $\bm{X}$.

Next, we note that the task of computing fidelities between pairs of fingerprints reduces to the task of computing the matrix product of overlaps
\begin{equation}
    \bm{O}=\bm{X}\bm{X}^\dagger.
\end{equation}
The fidelity matrix $\bm{F}$ follows directly from $\bm{O}$ by taking the elementwise squared-magnitudes.

In Fig.~\ref{fig:noiseless_sims}(b,c), we plot the the max fidelity between all pairs of 9-qubit quantum fingerprints generated by this procedure for two different connectivities: nearest-neighbor on a 3-by-3 grid and fully connected. For each circuit depth, we perform five trials seeded with different nonces appended to the file bits. We also plot the performance of Haar-random sampling seeded by each of the files as a benchmark.

As anticipated in Sec.~\ref{subsec:hw_eff}, we see that the random circuit performance matches that of the Haar-random sampling even at relatively shallow depths and that the practical pseudorandom seeding achieves low overlap between distinct fingerprints. For instance, at a depth of just $5$ layers, all trials in the nearest-neighbor case with 1024 input states have a maximum pairwise fingerprint fidelity below 0.05 (across 524k pairs). By a depth of just 7 layers, the performance is visually indistinguishable from true Haar-random sampling. Moreover, we note that the performance between the nearest-neighbor and fully-connected cases is nearly indistinguishable. Although this may be an artifact of the small diameter of the 3-by-3 layout, it is nonetheless encouraging for hardware-efficient approaches.

Generating $\bm{X}$ matrices for the simulations in Fig.~\ref{fig:noiseless_sims}(b,c) required billions floating point operations, which GPU computation can easily handle in seconds. In ongoing work, we are running even larger simulations with closer to a million states (files) encoded into 9-qubit fingerprints with Supercheq-EE.

\subsection{Noisy Simulation}\label{subsec:noisy_sim}

\begin{figure}[t]
    \centering
    \subfloat[\label{subfig:noise-ideal}]{%
        \includegraphics[width=0.5\linewidth]{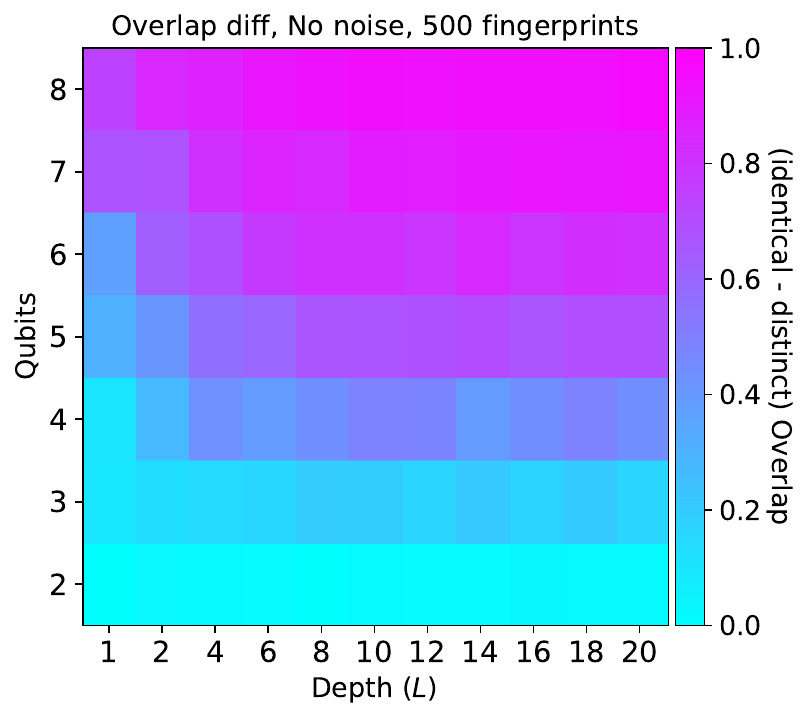}%
    }
    \subfloat[\label{subfig:noise-coherent}]{%
        \includegraphics[width=0.5\linewidth]{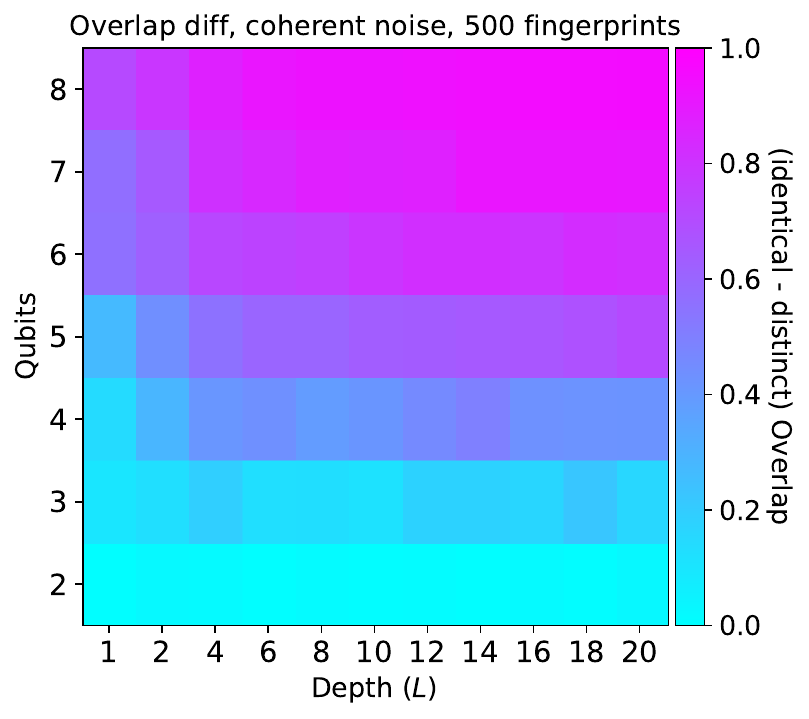}%
    }\hfill
    \subfloat[\label{subfig:noise-thermal}]{%
        \includegraphics[width=0.5\linewidth]{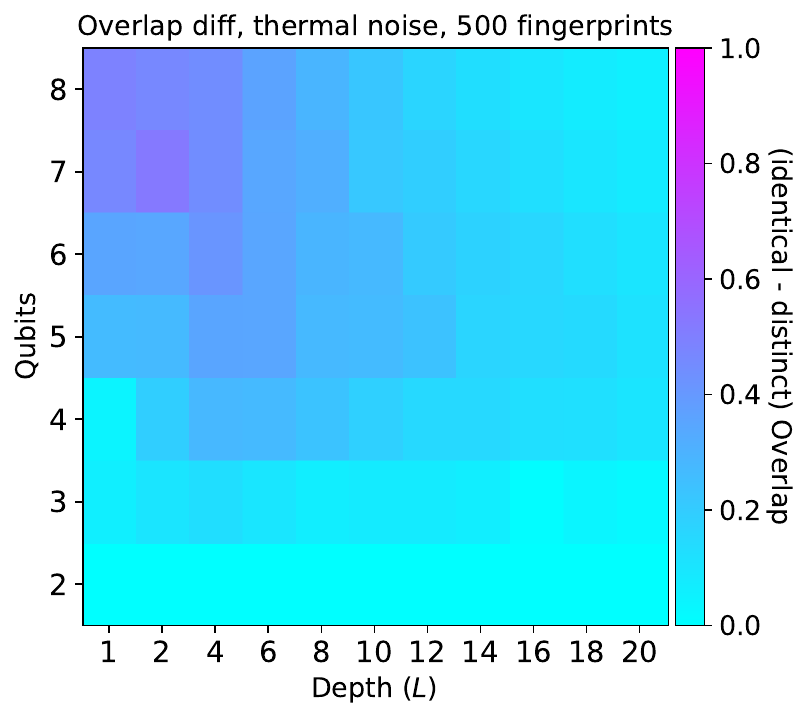}%
    }
    \subfloat[\label{subfig:noise-pauli}]{%
        \includegraphics[width=0.5\linewidth]{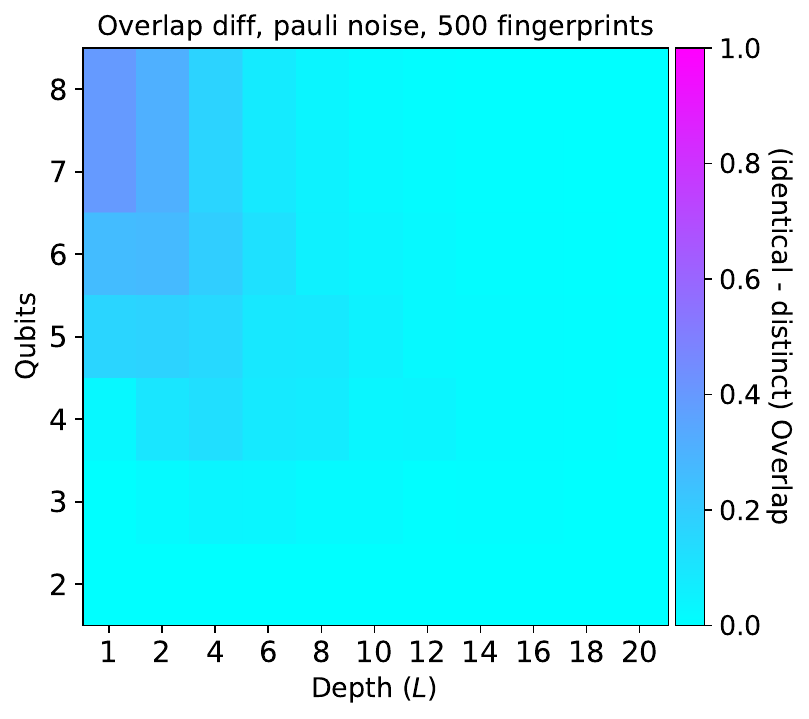}%
    }\hfill
    \caption{Difference between identical and distinct fingerprint overlaps obtained via (a) ideal noiseless simulation and noisy simulation using (b) coherent, (c) thermal, and (d) Pauli noise models, using the same pseudorandom circuit model as in Fig.~\ref{fig:noiseless_sims}(a). The Supercheq-EE protocol is robust to constant, coherent over rotations and even at short circuit depths it is able to successfully distinguish between a large number of inputs greater than the total number of individual basis states.}
    \label{fig:noisy-heatmap}
\end{figure}

We now consider the effect of noise, which can be a major barrier to the implementation of useful quantum circuits, including Supercheq. As shown in Fig.~\ref{fig:noiseless_sims}, in the noiseless case, it is beneficial to increase the depth of the random circuit to increase the amount of scrambling. However, with the consideration of noise, increasing circuit depth can adversely affect the Supercheq-EE protocol, for example by driving distinct random circuits towards the same maximally-mixed state. We thus investigate the sensitivity of Supercheq-EE's performance under the influence of different noise models including Pauli, thermal relaxation, and coherent noise channels.

We consider three separate noise models constructed using the Qiskit software package \cite{Qiskit}. Many quantum errors are often modeled as stochastic applications of Pauli matrices \cite{terhal2015quantum, erhard2019characterizing}.
We use a Pauli noise model which randomly applies an $X, Y, \text{ or } Z$ operation to any qubits participating in a single- or two-qubit gate with probabilities $p_X=p_Z=0.001$ and $p_Y=0.003$ based on the recent benchmarking of a superconducting quantum computer \cite{chen2022learnability}.
We also construct a thermal noise model which assigns an execution time to each gate---particularly, single-qubit rotations taking 100\,ns while two-qubit gates require 300\,ns---and a characteristic $T_1=50\,\mu\text{s}, T_2=70\,\mu\text{s}$ time to each qubit.
Finally, we also consider a deterministic coherent noise channel that applies a $\frac{\pi}{24}$ over rotation to any qubit participating in a single- or two-qubit gate. We summarize the performance of Supercheq-EE (using the protocol described in Fig.~\ref{fig:noiseless_sims}(a)) under these noise models at a variety of depths and qubit counts in Fig.~\ref{fig:noisy-heatmap}. For each pair of qubit and depth values, we report $(\omega_{i} - \omega_d)$, where $\omega_i$ is the minimum overlap observed between all pairs of \textit{identical} fingerprints, and $\omega_d$ is the maximum overlap observed between all pairs of \textit{distinct} fingerprints. Appendix~\ref{app:noisy} contains additional data from our noisy simulations including effects of increasing the total number of fingerprints evaluated.

As shown in Fig. \ref{fig:noiseless_sims}(b,c), deeper circuits will be necessary to generate sufficiently distant quantum states as the size of the input file grows and, therefore, the size of the files that Supercheq may be applied to will mainly be limited by noise. The noiseless and coherent noise model results in Fig.~\ref{fig:noisy-heatmap}(a,b) support this intuition, showing that the difference in overlaps grows as the depth of the circuits is increased. However, even for the noisier results in Fig.~\ref{fig:noisy-heatmap}(b,c), we see that even with relatively shallow circuits, Supercheq-EE is able to adequately distinguish between a total number of fingerprints (500) much greater than the number of individual basis states ($2^8$).
Noise mitigation techniques~\cite{PhysRevX.11.031057,PhysRevX.11.041036,errormit, ravi2022vaqem} as well as hardware improvements will be valuable to increase the size of files which can be verified using Supercheq, but ultimately, error correction and fault tolerance will provide a path to running Supercheq at scale.

\section{Experimental Results} \label{sec:experimental}

We experimentally realized proof-of-concepts underpinning Supercheq on both IBM superconducting hardware as well as Diraq superconducting hardware. In both cases, we demonstrate key primitives on physical qubits, as a stepping stone to future scalable error-corrected realizations of Supercheq.

\subsection{Supercheq-EE on IBM Superconducting Hardware}

To experimentally validate Supercheq-EE, we conducted fingerprinting on $n=3$-qubit states. Per Table III in \cite{scott2005optimal}, the best possible classical strategy with 3-qubit fingerprints would have a one-sided worst-case error of at least 50\% when encoding 9 states (meanwhile, encoding $2^3 = 8$ states would trivially have zero error). Meanwhile, ideal simulation results indicate that Supercheq-EE at shallow depth can achieve a worst-case error less than 46\% (see Fig.~\ref{fig:simulation_and_experiment}(a)).

\begin{figure}
    \centering
    \includegraphics[width=0.5\textwidth]{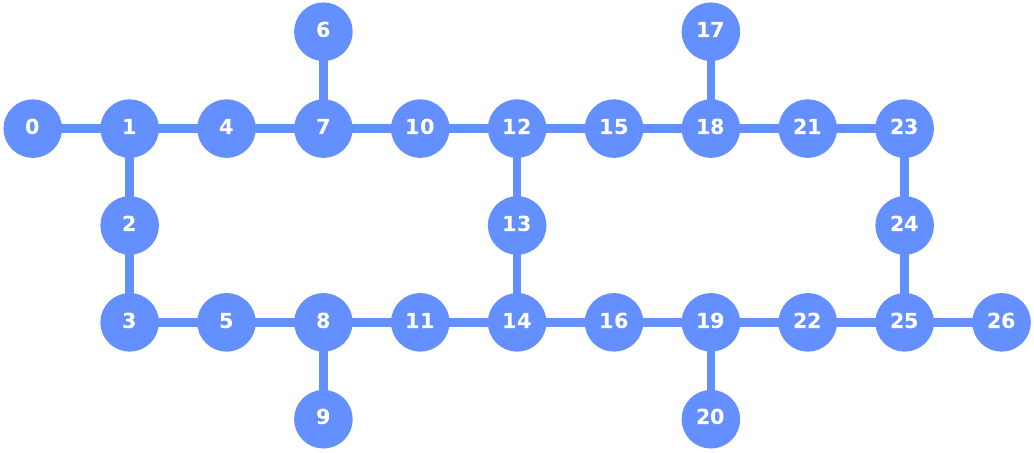}
    \caption{Topology of the 27-qubit ibm\_algiers backend. Our fingerprinting experiments were executed on 7- and 6- qubit segments for standard and destructive SWAP tests, respectively.\label{fig:algiers_topology}}
\end{figure}

This potential motivates our experiments, which were executed on IBM Quantum hardware accessed through Qiskit Runtime. In particular, we ran Supercheq-EE on the 27-qubit ibm\_algiers backend from the Falcon family (topology given in Fig.~\ref{fig:algiers_topology}), a generation of quantum computers that have achieved quantum volume of 512 and CLOPS (\emph{circuit layer operations per second}) of 15,000 \cite{gambetta2022expanding}. The high quantum volume is directly relevant to Supercheq because of the close relationship between quantum volume circuits \cite{cross2019validating} and the approximate unitary-$t$ designs relevant to Supercheq. Specifically, the quantum volume circuits are identical to the \textit{parallel random circuit} (with fully-connected topology) model for approximate unitary $t$-designs from \cite{brandao2016local}. Thus, hardware with high quantum volume is also able to produce larger approximate unitary $t$-design circuits. In addition, high CLOPS ensures that the quantum computer can rapidly produce a fingerprint state, as measured in wall clock time.

\begin{figure}
    \centering
    \includegraphics[width=0.5\textwidth]{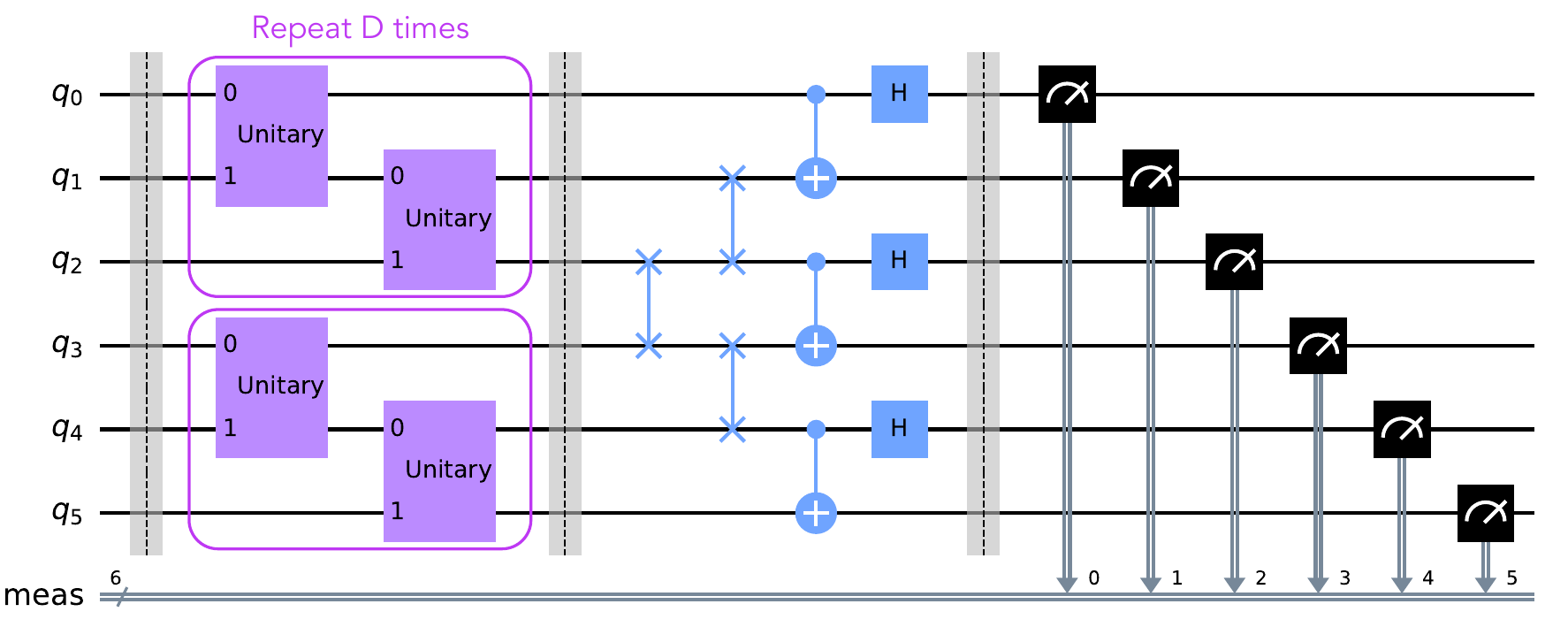}
    \caption{ Qiskit \cite{Qiskit} circuit schematic for Supercheq-EE with destructive SWAP test, which is well-suited to linear qubit topology.}
    \label{fig:circuit_schematic}
\end{figure}
Our experiments for Supercheq-EE involve two steps: an encoding circuit for Alice and Bob, followed by an inner product measurement step performed by the referee. The encoding circuit we use is the \textit{local circuit model} on a linear qubit topology, which is also proven to be an approximate unitary $t$-design in \cite{brandao2016local}. For three qubits arranged as $q_0\mbox{---}q_1\mbox{---}q_2$, the circuit is simply a sequence of Haar-random $\mathrm{SU}(4)$ unitaries, applied at random to either the $q_0\mbox{---}q_1$ or $q_1\mbox{---}q_2$ pair. This model again nearly coincides with Quantum Volume, only differing in that Quantum Volume would also allow $q_0\mbox{---}q_2$ gates (which would be recompiled to match the hardware topology).

\begin{figure}
    \centering
    \subfloat[]{%
        \includegraphics[width=0.25\textwidth]{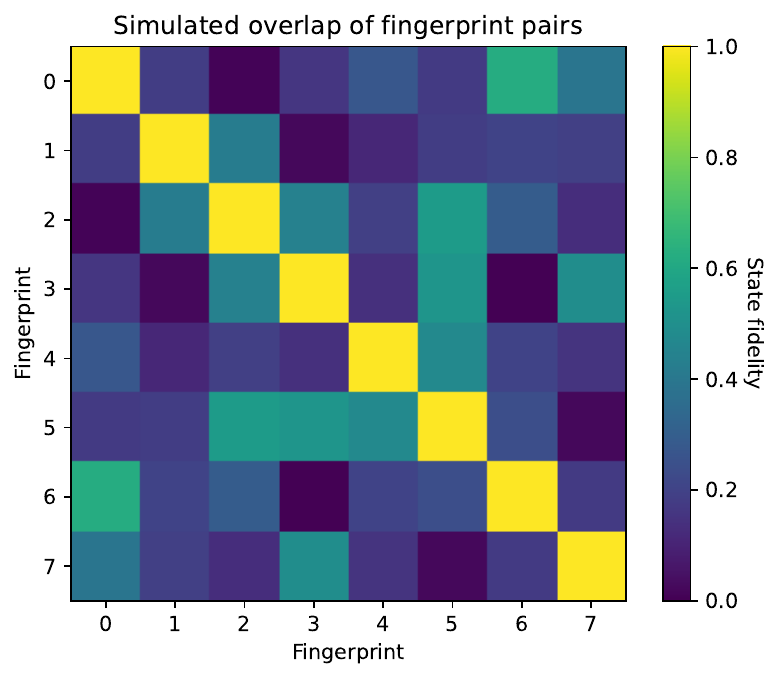}%
    }
    \subfloat[]{%
        \includegraphics[width=0.25\textwidth]{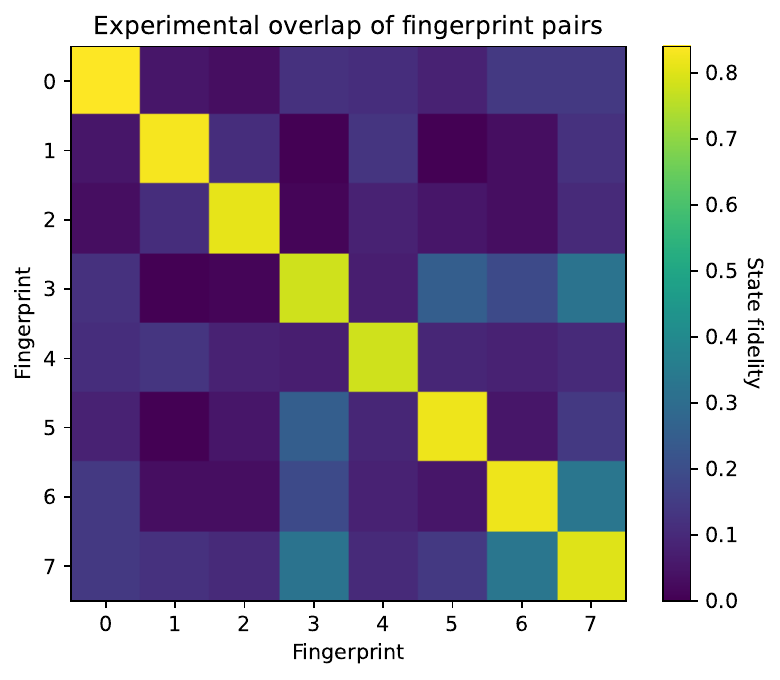}%
    }\\
    \subfloat[]{%
        \includegraphics[width=0.35\textwidth]{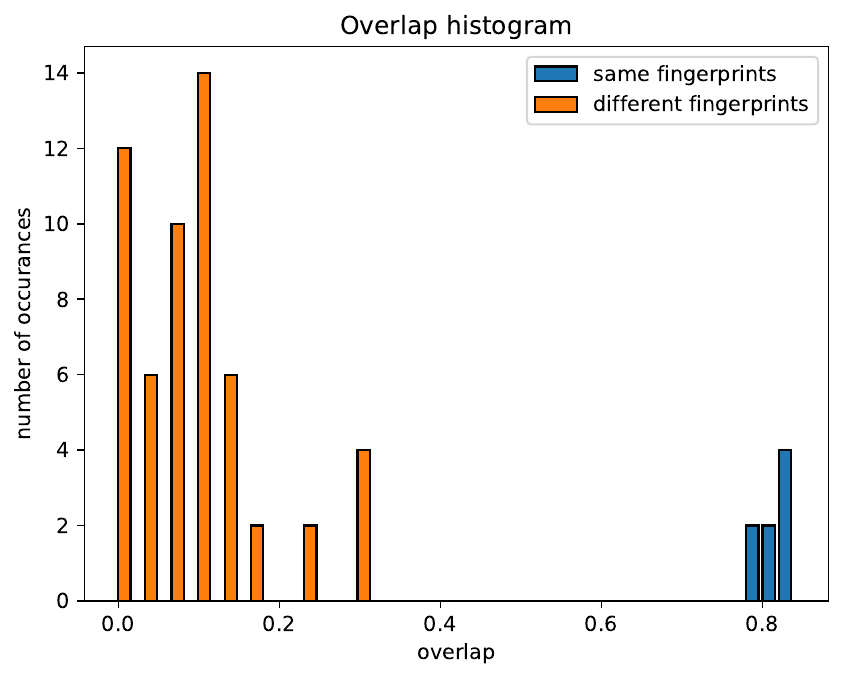}%
    }
    \caption{Overlap observed between 8 pairs of fingerprints, in (a) simulation and (b) experiment on ibm\_algiers using the destructive SWAP test averaged over $10^4$ shots. (c) Histogram of overlaps for the experiment on ibm\_algiers.}
    \label{fig:simulation_and_experiment}
\end{figure}
On limited-connectivity topologies, we found it advantageous to use the destructive SWAP test to distinguish fingerprints. In this variant, one performs a CX between every corresponding pair of qubits in Alice and Bob's fingerprints, followed by a Hadamard on every pair of Alice's fingerprints. Then, each qubit is measured. Classical post-processing on the measured $2n$ bits yields a fidelity estimate. On linear topology, the CX between every corresponding pair of qubits in Alice and Bob's fingerprints can be performed by a SWAP network which has shallow depth \cite{hashim2022optimized, tomesh2021coreset}, and each SWAP can be optimized to the native cross-resonance gates on IBM hardware \cite{gokhale2021faster}. The overall schematic of our Supercheq-EE experiments is shown in Fig.~\ref{fig:circuit_schematic}, including the destructive SWAP test. The SWAP network is the sequence of three SWAP gates; after these SWAPs, the qubits are arranged so that Alice and Bob's fingerprint qubits are interleaved. Thereafter, the three CX gates can be implemented with nearest-neighbor connectivity. The one disadvantage of this approach is that all qubits are measured at the end, so the fingerprints cannot be sent back to Alice and Bob to be reused.

The results---both from ideal simulation and from experiment---are shown in Fig.~\ref{fig:simulation_and_experiment}. In both cases, we apply a depth of just one layer as shown in Fig.~\ref{fig:circuit_schematic}. The heatmaps represent the symmetric matrix of fidelities between 9 input ``files". In the ideal simulation (Fig.~\ref{fig:simulation_and_experiment}(a)), the diagonal has 1.0 elements because every fingerprint has 1.0 fidelity with itself; off-diagonal, the max pairwise fidelity is 0.46. Experimentally (Fig.~\ref{fig:simulation_and_experiment}(b)) we found that all estimated fidelities (averaged over $10^4$ shots) are smaller, as would be expected in the limit of every state approaching the maximally mixed state due to noise. Note also that the destructive SWAP test is sensitive to measurement errors on $2n$ qubits, which is also an experimentally limiting factor.

However, a clear pattern persists on the diagonal, wherein we can still separate the case of identical fingerprints from the off-diagonal case of differing fingerprints. In particular, the minimum (self-)fidelity on the diagonal is 0.029, whereas the maximum off-diagonal fidelity is 0.017. The histogram of pairwise fidelities is shown in Fig.~\ref{fig:simulation_and_experiment}(c). Though the presence of noise here removes any quantum advantage compared to the best-known classical protocols, error-mitigation techniques~\cite{PhysRevX.11.031057,PhysRevX.11.041036,errormit, ravi2022vaqem} may sufficiently improve the performance of proof-of-principle tests of Supercheq on existing quantum devices to achieve better space complexity than known (asymptotically) optimal classical protocols. We leave further investigation along these lines for future work.

\subsection{Interference Circuit on Diraq Silicon Spin Qubit Hardware}

Our second experimental proof-of-concept measures overlaps between fingerprint states via the \textit{interference circuit} \cite{schuld2018supervised} which halves the total qubit count required relative to the SWAP test. In particular, recall that our goal is to estimate $\braket{\psi_A}{\psi_B}$. With the SWAP Test, we require two quantum data registers, $q_A$ and $q_B$, and one ancilla qubit. 
The inputs to the quantum registers $q_A$ and $q_B$ are $\ket{\psi_A}$ and $\ket{\psi_B}$, respectively. 

In comparison, the Interference circuit requires only one data register $q_d$ (plus one ancilla qubit). 
The difference is that, in the interference test, instead of inputting the quantum states directly into the circuit, 
we need to know the operations that create them—i.e., we need to know $U_A$ and $U_B$ such that
\begin{align}
    \ket{\psi_A} = U_A \ket{0}, \\
    \ket{\psi_B} = U_B \ket{0}.
\end{align}

\begin{figure}[t]
\centering
\begin{quantikz}
    \lstick{$\ket{0}$} & \gate{H} & \octrl{1} & \ctrl{1} & \gate{S^\lambda} & \gate{H} & \meter{} \\
    \lstick{$\ket{0}$} & \qw & \gate{U_A} & \gate{U_B} & \qw && \\
\end{quantikz}
\caption{Quantum circuit of Intereference-test}
\label{fig:intereference_circuit}
\end{figure}

The Interference-test circuit is depicted in Fig.~\ref{fig:intereference_circuit}. The probability of the measured ancilla qubit being zero, $P_0$, is 
\begin{align}
P_0 = 
\begin{cases}
    \frac{1}{2} \left( 1 + \Re \braket{\psi_A}{\psi_B} \right) & \lambda = 0, \\ \\
    \frac{1}{2} \left( 1 - \Im \braket{\psi_A}{\psi_B} \right) & \lambda = 1.
\end{cases}
\end{align}

To demonstrate a proof-of-concept example, we consider the data qubit register $q_d$ consisting of a single qubit. Our objective is to determine the state overlap between $\ket{0}$ and the set of states $\{\ket{0}, \ket{1}, \ket{+}, \ket{-}, \ket{i}, \ket{-i}\}$, representing the six cardinal points on the Bloch sphere. To this end, we set $U_A = I$ and $U_B \in \{I, X, H, HZ, HS, HS^\dagger\}$. This aligns with Supercheq implemented using one-qubit fingerprints. The interference circuit corresponding to this configuration (excluding the case $U_B = I$, which results in no gates after compilation) was implemented on Diraq’s two-qubit system. Fig.~\ref{fig:interference_result} presents the results. The agreement with the theoretical values—indicated by bold black dashed lines—demonstrates the validity of this proof-of-concept approach for evaluating the overlap between two quantum states.

The results exhibit dependence on the device temperature. As expected, lower temperatures yield closer agreement with the theoretical predictions. Moreover, the consistent trend observed in the estimated overlap values across different temperatures suggests that it may be possible to achieve the quantum advantage of Supercheq even at elevated temperatures, as the results reliably distinguish between non-identical quantum states. A more detailed investigation of this temperature dependence is left for future work.

\begin{figure}
\centering
\includegraphics[width=\columnwidth]{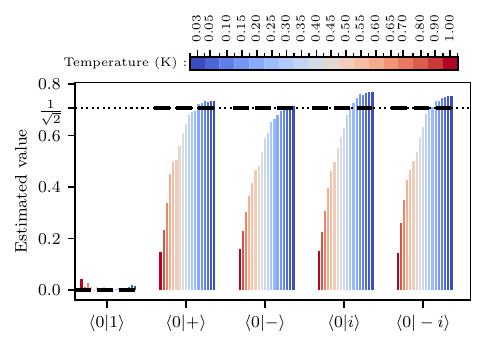}
\caption{Estimated state overlap between $\ket{0}$ and $\ket{\psi_B} \in \{\ket{1}, \ket{+}, \ket{-}, \ket{i}, \ket{-i}\}$, obtained using Diraq’s two-qubit device. The estimation depends on the device temperature. Bold dashed lines indicate the theoretical values, while the dotted line marks the value of $1/\sqrt{2}$.}
\label{fig:interference_result}
\end{figure}

\color{black}

\section{Conclusion}
\label{sec:conclusion}
In summary, Supercheq-EE gives prescriptive fingerprinting protocols that outperform classical approaches in the simultaneous message passing setting. While less efficient in compression than a previous quantum protocol~\cite{buhrman2001quantum}, Supercheq-EE scales gracefully to allow tradeoffs between encoding efficiency and quantum fingerprint preparation cost by invoking random circuit sampling. In effect, this endows quantum circuits from quantum supremacy and quantum volume experiments with a practical application.

Supercheq-IE complements this high-compression approach with a qualitatively different advantage: incrementality. In the graph subvariant, Supercheq-IE matches the optimal classical $\Theta(\sqrt{N})$ fingerprint scaling in the SMP setting while enabling constant-cost updates to the fingerprint under local changes to the underlying file. To our knowledge, no comparable non-cryptographic classical fingerprinting protocol achieves both of these properties. The hypergraph subvariant further strengthens this result, reducing the fingerprint size to $O(N^{1/\ell})$ for fixed $\ell$ while preserving constant-cost incremental updates. This makes Supercheq-IE particularly well suited to replicated, versioned, or frequently updated data, where maintaining and updating a fingerprint may be more valuable than generating one from scratch.

\subsection{Envisioned Applications}

While fingerprinting to check that two files are equal is a ubiquitous primitive throughout distributed data settings, the SMP model has certain limitations on its use. We begin by developing a characterization of a scenario well suited to Supercheq:
\begin{itemize}
    \item Data transmission is expensive. The cost of transmission can be a function of distance (e.g., transmission via satellites or to remote areas), data size (e.g., transmission of biological data, medical imagery, or government and military data), or logistical complexity (e.g., transmission of financial transactions across many tax jurisdictions).
    \item The two endpoints (Alice and Bob) cannot communicate directly and thus require  mediation through a third-party referee (e.g., secure multi-party computation and game theoretic applications, such as double auctions). In this scenario, SMP is already the necessary model.
\end{itemize}
Furthermore, Supercheq-IE is a natural fingerprinting protocol when data is replicated, versioned, or incrementally updated (e.g., data stored with a cloud provider).

We next conceptualize a few concrete categories of possible use cases:
\begin{itemize}
    \item Checking consistency of inputs. For instance, in distributed financial transactions, institutions want to ensure transactions are recorded at every node. Since transaction broadcasts can be lossy, Supercheq could be used to ensure that distributed parties received the same inputs. We note that the financial transaction consensus problem has high economic value. Another concrete application would be to RAID (Redundant Array of Inexpensive Disks) \cite{patterson1988case}; file storage redundancy could be provided by Supercheq rather than existing classical techniques.
    \item Checking consistency of processes. Even if we are assured that Alice and Bob have identical inputs, their processing techniques could be different. For example, if two nodes have different machine learning models, a third party may be interested in whether they produce the same outputs when provided with identical inputs. This use case maps well to VLSI design. In particular, minimizing communication within a chip is an important existing application for work on communication complexity \cite{mehlhorn1982vegas, kushilevitz1997communication}. In the VLSI context, Supercheq could be used to check if two spatially separated registers or program counters are identical; in this setting, Supercheq could conceivably help lower power consumption for CPUs and GPUs.
    \item Frequency checks for temporal updates. For example, in a distributed database application where Alice and Bob represent replicated nodes that are updated frequently, a referee may be interested in knowing the first instant at which inputs diverge. This is a use case for Supercheq-IE, since incrementality is critical.
    \item Alice (or Bob) is simply a trusted delegate of the refereee. For instance, suppose that the referee stores Bob's authentication credentials on the cloud with Alice. When Bob wishes to authenticate into the referee's access system, Bob can send a fingerprint of his credentials to the referee. When the referee tests against the (potentially) matching fingerprint from Alice, they can make a decision as to whether Bob has valid authentication tokens.
\end{itemize}

\subsection{Future Work}
We outline three thematic areas for continued investigation. First, a number of techniques could be used to improve the efficiency of Supercheq, both in terms of ``compression" and in terms of noise-resilience. To achieve higher compression, one could start by using qudits (systems with $d$ local dimensions, e.g. $d=3$ is a qutrit) for Supercheq-EE instead of qubits. In the Supercheq-IE setting, we also suggest exploring the use of qudit graph states~\cite{quditgraph} instead of qubit graph states. Another potential avenue for optimizing the efficiency of Supercheq may be by employing a number of pulse- or native gate- level techniques in the encoding and distinguishing circuits \cite{gokhale2019partial,shi2020resource, gokhale2020optimized}. One extreme case of this is a natural analog extension of Supercheq in which the random circuit is replaced by a scrambling analog quench or time-dependent Hamiltonian.

Second, we hope to initiate physical realization of Supercheq in a networked setting. In the local network case, linked quantum systems are emerging capable of coherently transmitting quantum states with modest fidelity between remote nodes containing multiple superconducting qubits~\cite{magnard2020microwave,zhong2021deterministic}. Furthermore, ion shuttling enables communication between spatially separated ion traps~\cite{pino2021demonstration}. Although quantum networking technology is in its early form, it lays the groundwork for what will eventually become infrastructure included in quantum-accelerated, warehouse-scale supercomputing clusters. 

Finally, we believe it would be fruitful to study connections to other quantum advantages in space or communication complexity. For example, recent work has demonstrated space advantages by using quantum computation for computing Boolean functions \cite{maslov2021quantum} and for linear regression \cite{montanaro2022quantum}. While quantum advantages in time complexity have been studied in depth, these advantages in space or communication complexity are relatively understudied and merit further investigation.

To kickstart further exploration of its potential, tutorials demonstrating Supercheq are now available in the Superstaq ~\cite{campbell2023superstaq} open-source repositories, \texttt{cirq-superstaq} (\href{https://github.com/SupertechLabs/cirq-superstaq/blob/main/examples/Supercheq.ipynb}{link}) and \texttt{qisit-superstaq} (\href{https://github.com/SupertechLabs/qiskit-superstaq/blob/main/examples/Supercheq.ipynb}{link}).

\section*{Acknowledgements}
This material was supported by the Australian Army and was featured in the Australian Army’s Quantum Technology Challenge (QTC) to prevent disruption of QCs and ensure their reliability and resilience. This material is supported in part by the U.S. Department of Energy, Office of Science, Office of Advanced Scientific Computing Research under Award No. DE-SC0021526 and DE-SC0025493. E.R.A.\ was partially supported by STAQ under award NSF Phy-1818914. We acknowledge the use of IBM Quantum Credits via the IBM Quantum Startups Program for this work. The views expressed are those of the authors and do not reflect the official policy or position of IBM or the IBM Quantum Platform team.

\bibliography{apssamp}

\appendix

\section{Proofs} \label{app:proofs}
Consider two randomly drawn $n$-qubit states, $\ket{\psi_i}$ and $\ket{\psi_j}$. We are interested in the fidelity (squared-overlap) between these two random states:

$$F_{ij} = \left\lvert\bra{\psi_i}\ket{\psi_j}\right\rvert^2.$$

\subsection{Haar-Random Sampling}

\Supercheqeethm*

\begin{proof}
In this case, the probability density for $F_{ij}$ when $\ket{\psi_i}$ and $\ket{\psi_j}$ are drawn from the Haar-random measure is \cite{zyczkowski2005average, kus1988universality}
$$p(f) = (2^n - 1)(1-f)^{2^n-2}.$$
Note that this is a $\text{Beta}(\alpha, \beta)$ distribution with $\alpha = 1, \beta = 2^n - 1$.

We are interested in upper-bounding the probability of the event
$$ \bigcup_{1 \leq i < j \leq K} F_{ij} > c$$
for some constant $c$. This event corresponds to $K$ randomly drawn states all being pairwise distinguishable by SWAP tests, with one-sided error decaying as $c^M$ when we perform $M$ SWAP tests. We will take $c=0.5$ for parity with the Supercheq-IE protocol, since differing graph states have fidelity of at most 0.5~\cite{PhysRevA.70.052328}. We calculate:

\begin{align}
\begin{split}
\Pr\left[ \bigcup_{1 \leq i < j \leq K} F_{ij} > 0.5 \right] &\leq \sum_{1 \leq i < j \leq K}\Pr[F_{ij} > 0.5] \\
& \text{\quad (Union Bound)} \\
&= \frac{K(K-1)}{2} \Pr[F_{12} > 0.5] \\
&= \frac{K(K-1)}{2} \left( \frac{1}{2} \right)^{2^n - 1} \\
& \text{\quad (CDF of Beta dist.)} \\
&= \frac{K(K-1)}{2^{2^n}}.
\end{split} \label{eq:haar_random_collision}
\end{align}

For $K \in o(\sqrt{2^{2^n}})$, for example $K = 1.4^{2^n}$, the probability of an indistinguishable pair approaches $0$ as $n \to \infty$. For example, even at just $n = 20$, this probability is upper bounded by $\sim 10^{-9000}$.
\end{proof}

\subsection{Approximate Unitary $t$-Design Sampling}

We now examine sampling approximate unitary $t$-designs. We first give a more formal definition of the notion of $t$-designs we consider here; namely, \emph{approximate $t$-designs under the monomial measure}~\cite{harrow2018approximate}, specializing here to qubits.
\begin{definition}[Monomial definition of approximate $t$-designs~\cite{harrow2018approximate}]
$\mu$ is a monomial-based $\epsilon$-approximate $t$-design on $n$ qubits if expectations of all degree-$\left(t,t\right)$ monomials are within additive error $\epsilon 2^{-nt}$ of those resulting from the Haar measure.
\end{definition}

Armed with this definition, we now prove Theorem~\ref{thm:t_design}. Unfortunately, as we require extremely tight errors---i.e., $\epsilon=\exp\left(-\operatorname{poly}\left(n\right)\right)$---we are outside of the regime of more efficient recent developments in implementing $t$-designs~\cite{10756150,doi:10.1126/science.adv8590,schuster2025strongrandomunitariesfast}, which require $\epsilon\geq\exp\left(-\operatorname{O}\left(n\right)\right)$.
\Supercheqeetdesthm*
\begin{proof}
    Let $U_i$ be i.i.d. drawn from an $\epsilon$-approximate unitary $t$-design under the monomial measure~\cite{harrow2018approximate}, and let $\ket{\psi_i}=U_i\ket{0}$. Let $\tilde{F}_{ij}$ be the random variable:
    \begin{equation}
        \tilde{F}_{ij}=\left\lvert\bra{\psi_i}\ket{\psi_j}\right\rvert^2,
    \end{equation}
    and $F_{ij}$ the same for Haar-random states. Since $F_{ij}\sim\text{Beta}(1, 2^n - 1)$, we have moments:
    \begin{equation}
        \E[F_{ij}^t] = \prod_{k=0}^{t-1} \frac{\alpha + k}{\alpha + \beta + k} = \prod_{k=1}^t \frac{k}{2^n + i - 1} \leq \frac{t^t}{2^{nt}}.
    \end{equation}
    As $U_i$ are i.i.d. drawn from an $\epsilon$-approximate unitary $t$-design under the monomial measure, and as the $t$th moment of the fidelity can be written as a sum of $2^{2nt}$ expectations of monomial terms (each with coefficient $1$), we thus have that:
    \begin{equation}
        \E[\tilde{F}_{ij}^t]\leq\E[F_{ij}^t]+\left\lvert\E[\tilde{F}_{ij}^t]-\E[F_{ij}^t]\right\rvert\leq \frac{t^t}{2^{nt}}+2^{nt}\epsilon.
    \end{equation}
    We therefore have by Markov's inequality that:
    \begin{equation}
        \Pr[\tilde{F}_{ij} > 0.5] \leq \frac{\E[\tilde{F}_{ij}^t]}{(0.5)^t} \leq 2^t\left(2^{nt}\epsilon+\frac{t^t}{2^{nt}}\right).
    \end{equation}
    Applying the Union Bound as in Eq.~\eqref{eq:haar_random_collision}, we see that:
    \begin{equation}
        \begin{aligned}
            \Pr&\left[ \bigcup_{1 \leq i < j \leq K} \tilde{F}_{ij} > 0.5 \right]\\
            &\leq \frac{K(K-1)}{2}\left(2^{t\left(n+1\right)}\epsilon+\frac{t^t}{2^{t\left(n-1\right)}}\right).
        \end{aligned}
    \end{equation}
    We now take $t=n^{\ell-1}$ and $\epsilon=2^{-2n^\ell}$. Note by Corollary 1 of \cite{Haferkamp2022randomquantum} for sufficiently large $n$, there exist nearest-neighbor random quantum circuits in 1D that sample from this distribution in depth $O\left(t^{4+o\left(1\right)}\left(nt+\log\left(\epsilon^{-1}\right)\right)\right)=O\left(n^{5.01\ell-4.01}\right)$. We then have that:
    \begin{equation}
        \begin{aligned}
            \Pr&\left[ \bigcup_{1 \leq i < j \leq K} \tilde{F}_{ij} > 0.5 \right] \\
            &\leq\frac{K(K-1)}{2}\left(1+2^{\left(\ell-1\right)n^{\ell-1}\log_2\left(n\right)}\right)2^{-n^{\ell-1}\left(n-1\right)}.
        \end{aligned}
    \end{equation}
    Let $K=1.4^{n^\ell}$. Then,
    \begin{equation}
        \begin{aligned}
            \Pr&\left[ \bigcup_{1 \leq i < j \leq K} \tilde{F}_{ij} > 0.5 \right]\\
            &=O\left(2^{-\left(1-2\log_2\left(1.4\right)+o\left(1\right)\right)n^\ell}\right)\to 0.
        \end{aligned}
    \end{equation}
\end{proof}

\section{Additional Noisy Simulation Results}
\label{app:noisy}

We here present additional numerical results to supplement those discussed in Sec.~\ref{subsec:noisy_sim}. Fig.~\ref{fig:noisy-max-overlap} measures the maximum fidelity between all distinct fingerprints produced via the Supercheq-EE protocol under the noise models discussed in Sec.~\ref{subsec:noisy_sim}; it is easy to see there is some threshold depth at which the performance of the protocol is optimal at a given noise rate. In Fig.~\ref{fig:noisy-heatmap-appendix}, we measure the fidelity between Supercheq-EE fingerprints for $5$-bit seed files and a depth $5$ fingerprinting circuit under the noise models discussed in Sec.~\ref{subsec:noisy_sim}. A clear distinction in state fidelities exists between fingerprints corresponding to identical seeds and those corresponding to distinct seeds.

\begin{figure*}
    \centering
    \subfloat[]{%
        \includegraphics[width=0.4\textwidth]{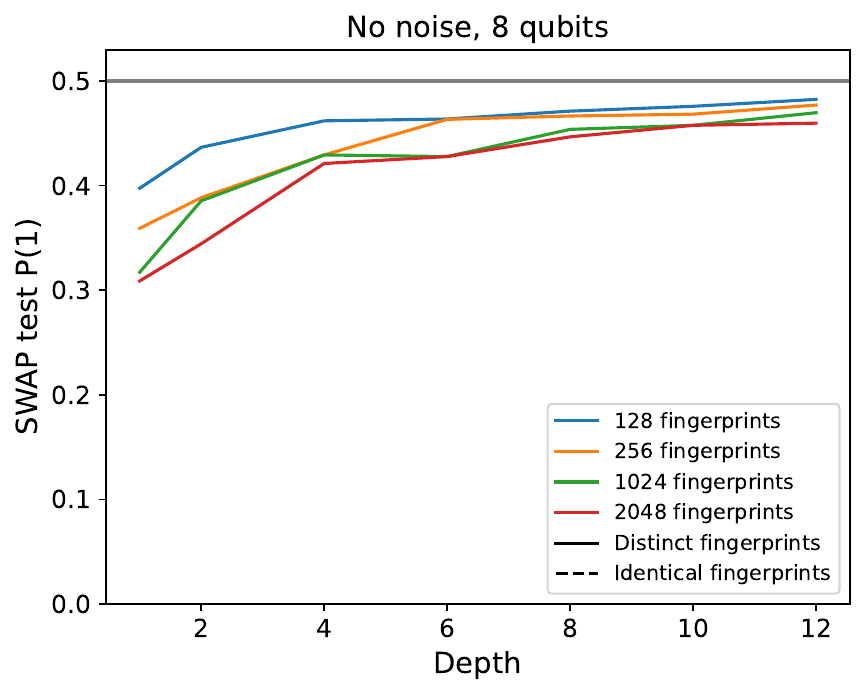}%
    }
    \subfloat[]{%
        \includegraphics[width=0.4\textwidth]{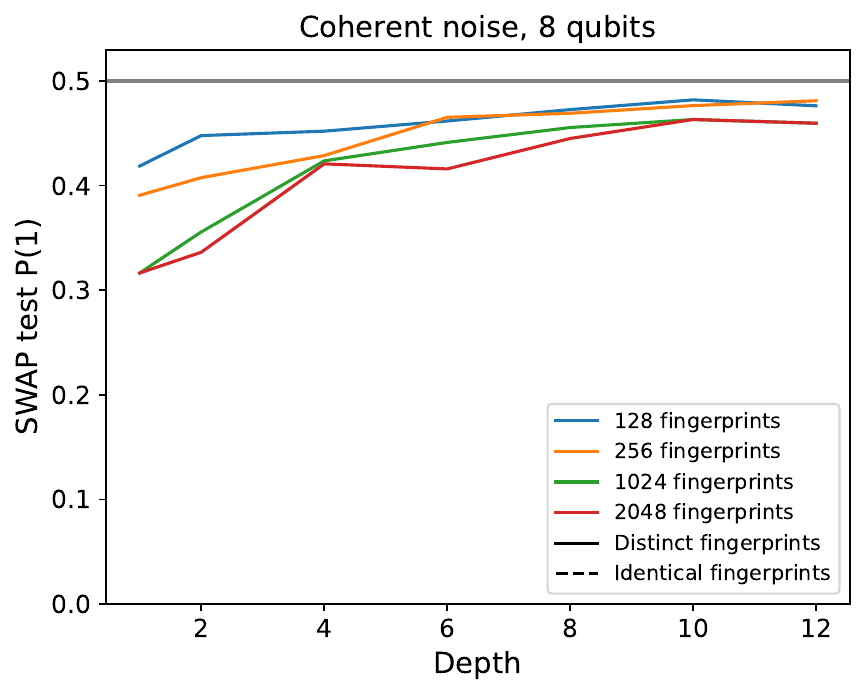}%
    }\hfill
    \subfloat[]{%
        \includegraphics[width=0.4\textwidth]{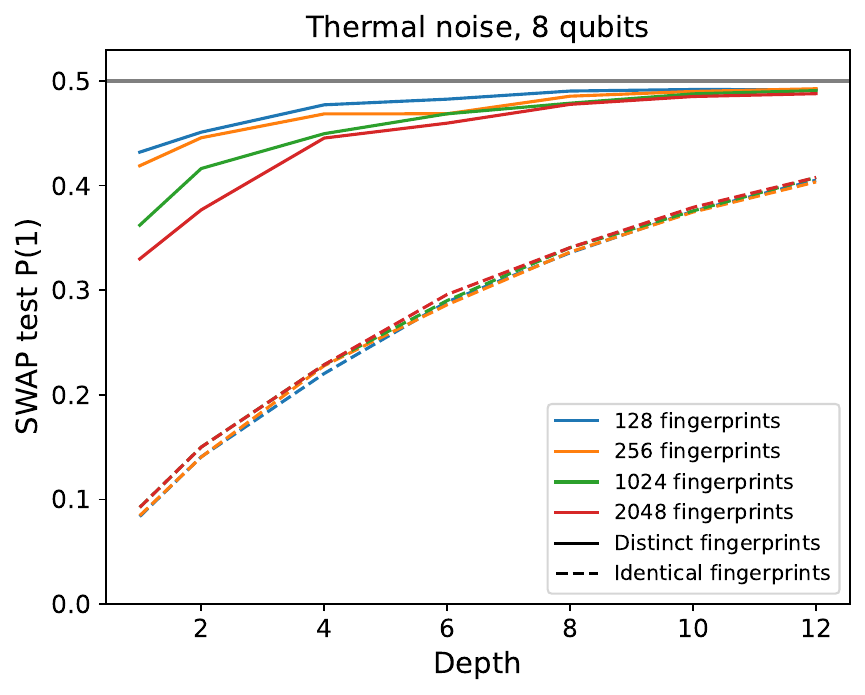}%
    }
    \subfloat[]{%
        \includegraphics[width=0.4\textwidth]{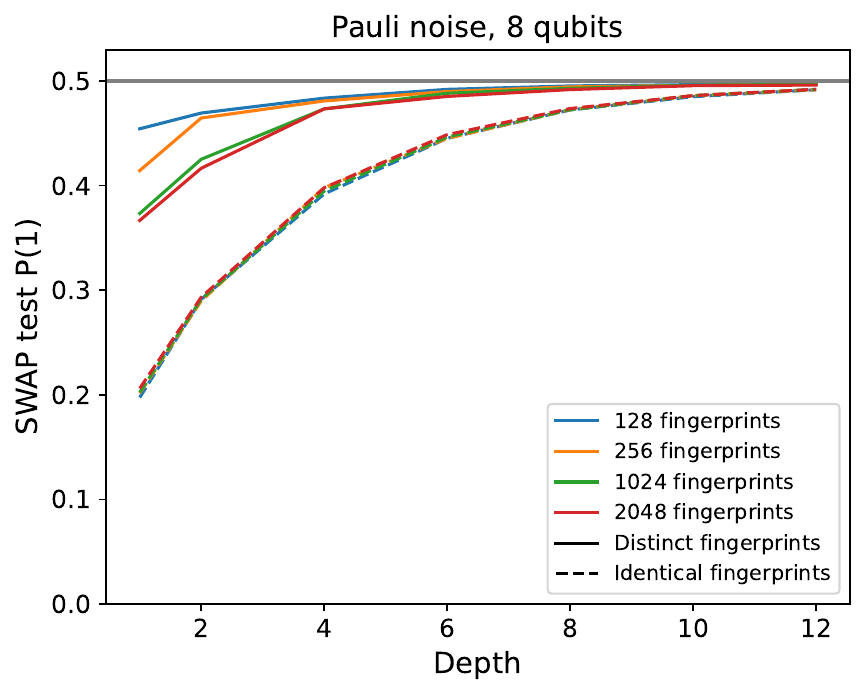}%
    }
    \caption{The y-axis shows the probability of measuring the $\ket{1}$ state when performing the SWAP test between two Supercheq fingerprints. For an increasing total number of fingerprints tested, we consider the SWAP test performed between distinct (solid lines) and identical (dashed lines) fingerprints under (a) noiseless, (b) coherent, (c) thermal, and (d) Pauli noise models. As the depth of the Supercheq-EE protocol is increased, the probability of measuring the ``1'' outcome of the SWAP test converges to 50\% for both the distinct and identical fingerprint pairs. Notably, the coherent noise model considered here, which applies a constant over rotation to every operation, has no effect on the outcome of the SWAP test.}
    \label{fig:noisy-max-overlap}
\end{figure*}

\begin{figure*}
    \centering
    \subfloat[\label{subfig:pauli-overlap}]{%
        \includegraphics[width=0.33\textwidth]{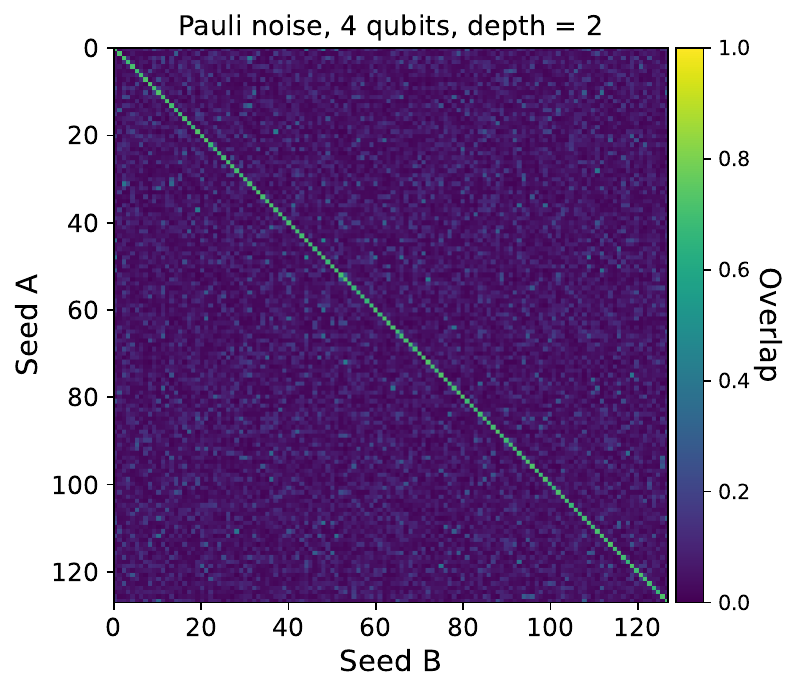}%
    }\hfill
    \subfloat[\label{subfig:thermal-overlap}]{%
        \includegraphics[width=0.33\textwidth]{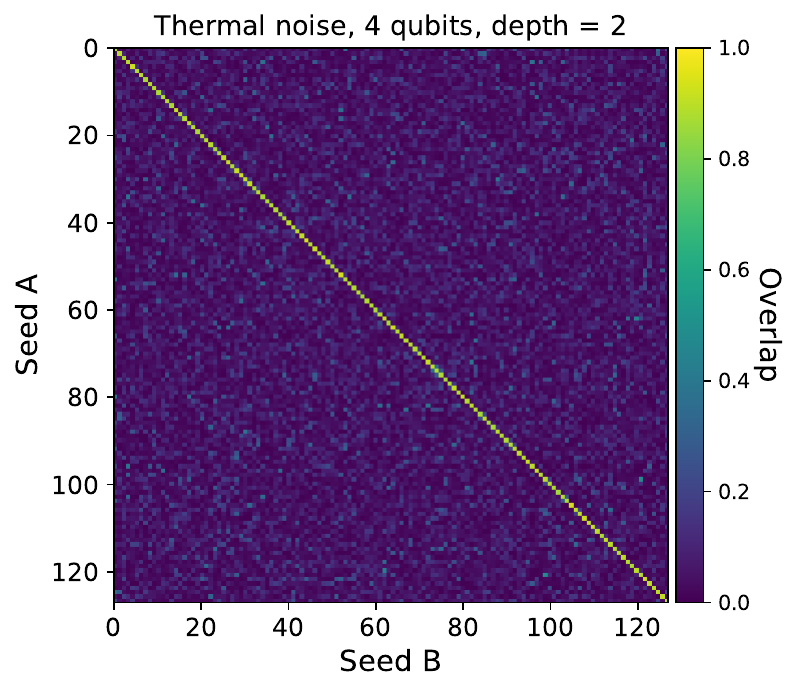}%
    }\hfill
    \subfloat[\label{subfig:coherent-overlap}]{%
        \includegraphics[width=0.33\textwidth]{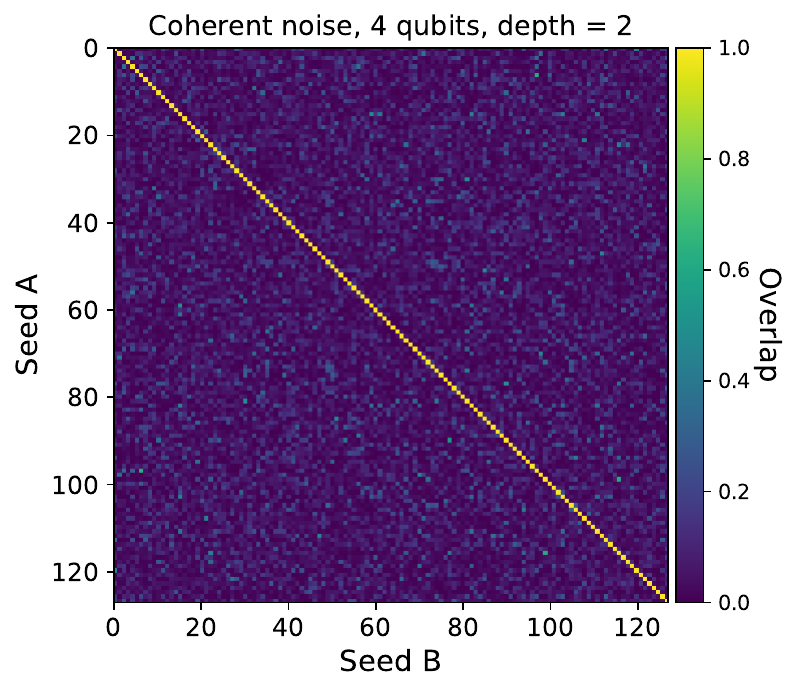}%
    }
    \caption{Overlap between all pairs of 128 files converted into Supercheq fingerprints with 4 qubits and a depth of 5 under (a) Pauli, (b) thermal, and (c) coherent noise models. Even at low depths, the fingerprints are easily distinguishable.}
    \label{fig:noisy-heatmap-appendix}
\end{figure*}

\end{document}